\documentclass[12pt,a4paper]{article}

\usepackage[english]{babel}

\usepackage[a4paper,top=2cm,bottom=2cm,left=2.5cm,right=2.5cm,marginparwidth=1.75cm]{geometry}

\usepackage[style=authoryear-comp, backend=biber, sortcites=false]{biblatex}
\usepackage{amsmath}
\allowdisplaybreaks
\usepackage{amssymb}
\usepackage{amsthm}
\usepackage{multirow}
\usepackage{graphicx}
\usepackage[colorlinks=true, allcolors=blue]{hyperref}
\usepackage[title]{appendix}
\usepackage{mathrsfs}
\usepackage{mathtools}
\usepackage{amsfonts}
\usepackage{caption}
\usepackage{authblk}
\usepackage{csquotes}
\usepackage[T1]{fontenc} 
\usepackage{lipsum}
\usepackage{float}

\usepackage{setspace}
\usepackage{titlesec}
\titleformat{\section} 
  {\normalfont\Large\bfseries}{\thesection.}{1em}{}

\theoremstyle{plain}
\newtheorem{theorem}{Theorem}
\newtheorem{lemma}[theorem]{Lemma}
\newtheorem{proposition}[theorem]{Proposition}
\newtheorem{corollary}[theorem]{Corollary}
\newtheorem{claim}{Claim}

\theoremstyle{definition}
\newtheorem{definition}{Definition}
\newtheorem{example}{Example}

\title{The Design of Optimally Balanced Pay-as-you-go Social Security Systems}

\author[1,2]{Leandro Lyra Braga Dognini}
\affil[1]{\small Department of Economics, Rio de Janeiro State University}
\affil[2]{\small Legislative Advisory, Federal Senate of Brazil} 
\date{\today} 

\begin{document}
\maketitle
\begin{abstract}
\noindent This paper bridges social security design and general equilibrium theory to conceive optimally balanced pay-as-you-go systems. The design is based on the backward calculation algorithm from \textcite{Dognini_2025}, which is used to find optimal monetary equilibria of prone-to-savings non-stationary overlapping generations economies with heterogeneous households. In particular, this algorithm makes the design applicable for reforming pay-as-you-go systems in countries undergoing demographic transitions. Due to households balanced budgets under equilibrium prices (i.e., Walras' law), these optimally balanced pay-as-you-go systems resemble the well-known notional accounts systems. The design is illustrated in a simplified framework using the past and forecast demographic and productivity dynamics of Brazil, China, India, Italy, and the United States from 1950 to 2070.
\end{abstract}

\textbf{Keywords}: Unfunded social security, pay-as-you-go, notional accounts, non-financial defined contribution, transition paths. 

\textbf{JEL}: D11, D50.

\section{Introduction}\label{sec1}

\textsc{Demographic transitions} are a major challenge that countries are facing due to widespread plummeting fertility rates, and sustaining fiscal balance and economic growth with an elderly and shrinking population is no easy task. Converging life expectancy and fertility rates between historically different countries are leading to a similar future demographic landscape, although the speed of transition varies greatly between them.

For instance, Brazil and the United States will have the same ratio of working-age population to elderly population by 2054, according to the most recent World Population Prospects (see \hyperref[fig1]{Figure 1} below); although Brazil will face, in less than 40 years, the demographic transition that took over a century in the United States. China will go through the same demographic shift in about 25 years, and India, although still a younger country, is also heading in this direction\footnote{According to Figure \ref{fig1}, the dependence ratio goes from 8.2 to 2.6: between 2017 and 2052 in Brazil; between 2012 and 2038 in China; between 2028 and 2071 in India; between 1950 and 2024 in Italy; and between 1950 and 2052 in the United States.}.

\begin{figure}[h]
\includegraphics[width=11cm]{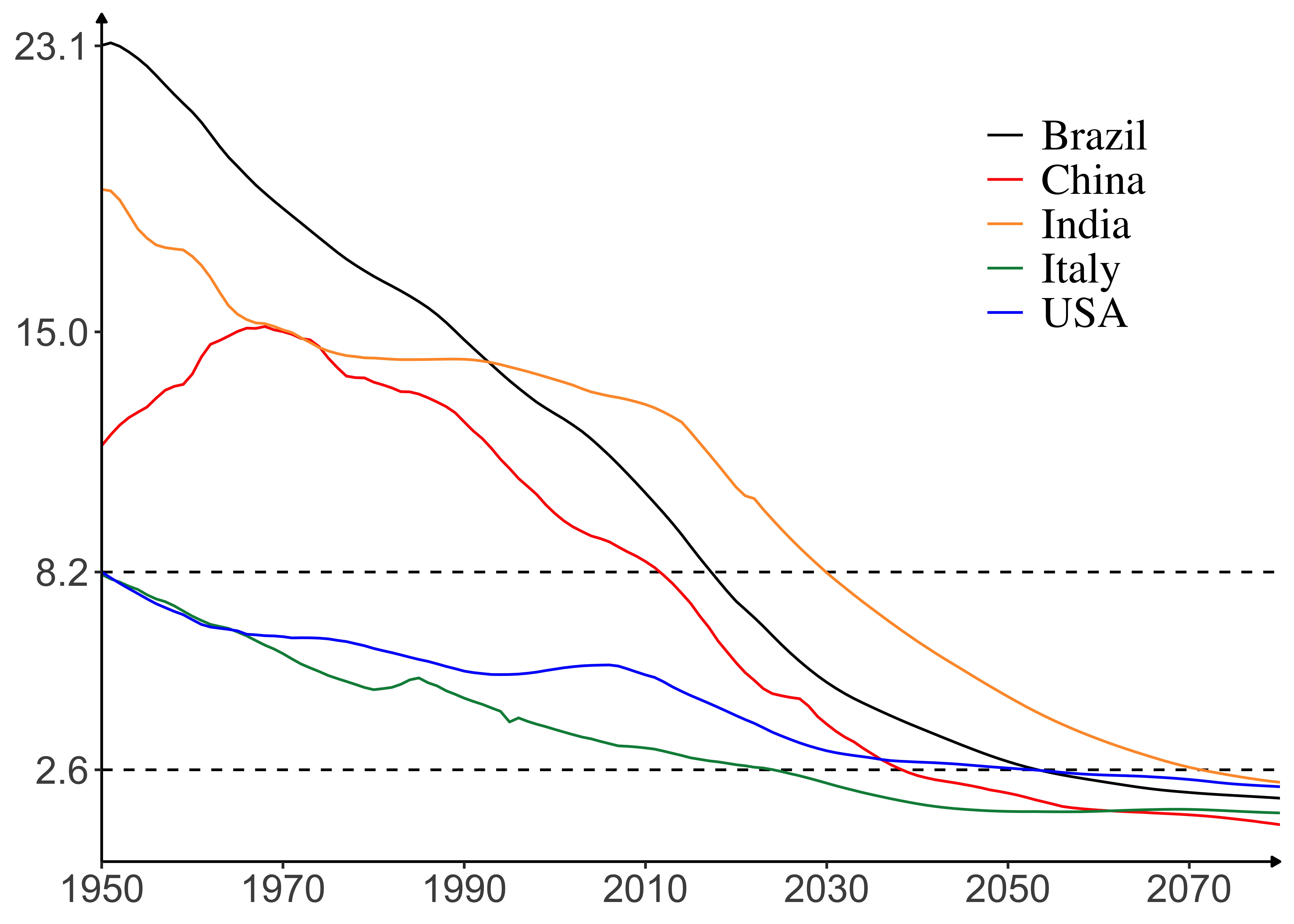}
\centering
\caption{Time evolution of the dependence ratio calculated as population with age between 15 and 64 years divided by population with age greater than 64 years \parencite{UN2024Pop}.}
\label{fig1}
\end{figure}

One cannot underestimate the repercussions of this transition. In 1950, Brazil had 23 working-age persons for each elderly one, and, less than a century later, it will not have 3. China's population may have already peaked and will not only get older but also smaller in the near future. Although countries in Europe have seen a more stable demography in the past decades, their fertility rates are currently reaching unprecedentedly low levels. It is not just a matter of cold projections; it is an entirely different global economic outlook in the blink of an eye.

A ubiquitous consequence of these demographic shifts is the upward pressure on pay-as-you-go social security systems. On one side, an elderly population increases the outflows from the existing systems; on the other, a contracting working-age population decreases (without substantial productivity gains) the inflows. Fiscal unbalances arise, leading to an increase in public debt, followed by a surge in taxes and reforms of pensions and other benefits (normally in this order due to political constraints).

For instance, in the United States, the 2024 OASDI Trustees Report \parencite{OASDI_2024} states that: persistent low fertility has led to a downward revision of total fertility rate assumptions; social security outflows are projected to be higher than inflows for all years from now on; and reserves are capable of covering costs only for the next 10 years\footnote{In June 2025, the Board of Trustees notified the President of the Senate and the Speaker of the House of Representatives: ``We are writing to notify you that we project that the reserves held in the Federal Old-Age and Survivors Insurance (OASI) Trust Fund will fall below 20 percent of annual cost by December 31, 2034, without legislation to address the imbalance between Social Security program revenues and benefits. The reserves expressed as a percentage of annual program cost of the OASI Trust Fund (the balance ratio) are projected to fall below 20 percent by the beginning of 2033. We project that the reserves of the OASI Trust Fund will be depleted soon after, during 2033, and only about 77 percent of benefits scheduled in current law will be payable at that time if no legislative action is taken.''}.

Given this scenario, this paper aims to build on the overlapping generations literature and lay a theoretical basis for the design of pay-as-you-go social security systems in a general demographic setup (thus including demographic transitions).

Following \textcite{Samuelson_1958}, overlapping generations models have become a well-established framework for the study of social security systems\footnote{Asymmetric information models are also a common means to study social security (e.g., \textcite{DiamondMirrlees_1978, DiamondMirrlees_1986}), though they are not addressed in this paper.}. For instance, \textcite{Barro_1974} uses them to demonstrate that bequests can offset inter-generational transfers; therefore, a pay-as-you-go social security system may have no real effect in this case (see \textcite{1993_Altig} for examples of when this offsetting effect does not hold). \textcite{Sheshinski_1981} adds lifetime uncertainty (e.g., \textcite{1965_Yaari}) to this model in order to derive optimal levels of social security benefits under a partial equilibrium analysis.

\textcite{Samuelson_1975} studies, in a stationary 2-period model with production (e.g., \textcite{Diamond_1965}), the optimal level of benefits paid and real-capital owned by a social security system that ensures a steady-state under the ``goldenest golden rule'' \parencite[p. 539]{Samuelson_1975} and \textcite{Feldstein_1985} uses this same sort of model to analyze the optimal level of social security benefits when individuals are myopic, i.e., ``lack the foresight to provide for their own old age'' \parencite[p. 303]{Feldstein_1985}. A comprehensive survey on these topics regarding social security design was written by \textcite{FeldsteinLiebman_2002}.

Furthermore, \textcite{Auerbach_1987} build a 55-period computable model with production and exogenously defined replacement rates to study the tradeoffs between unfunded and funded systems (e.g., Section C from Chapter 10 describes the crowding-out effect on capital\footnote{See \textcite{Kotlikoff_1979} for empirical findings on the effects of social security on savings.}) and the social security policy response to a demographic transition (e.g., Section F from Chapter 11 analyzes a cut from 60\% to 40\% in the replacement rate of the pay-as-you-go system \parencite[p. 175]{Auerbach_1987}).

Along this computable general equilibrium approach, \textcite{ImrohorogluJoines_1995,ImrohorogluJoines_1999} build a 65-period stochastic model to ``examine the optimality of and the welfare benefits associated with alternative social security arrangements'' \parencite[p. 84]{ImrohorogluJoines_1995}, concluding that a 30\% replacement rate is optimal for the conceived unfunded social security system. \textcite{Huggett_1999} deal with a 80-period model and compare its steady-states under different proposals for the US social security system.

More recently, \textcite{KruegerKubler_2002} worked with a stochastic overlapping generations model with incomplete markets to show that the introduction of a pay-as-you-go social security system can be Pareto improving. The same sort of conclusion is reached by \textcite{BallMankiw2007} and \textcite{GottardiKubler_2011}, where a pay-as-you-go social security system can lead to Pareto improvements due to inter-generational risk-sharing.

Moreover, \textcite{Conesa_2008} used overlapping generations models to calculate optimal transitions between unfunded and funded systems, and \textcite{Fehr_2013} deploy a 85-period stochastic model to find ``the optimal flat and earnings-related benefit combination that maximizes aggregate efficiency'' \parencite[p. 95]{Fehr_2013}.

With this literature overview in mind, the main contribution of this paper can now be clearly stated. The paper provides an algorithm for calculating the contribution and replacement rates of a pay-as-you-go social security system with a heterogeneous population of arbitrary lifespan (i.e., we are not restricted to models with a 2-period household lifespan) and a general demographic setup (i.e., we are not restricted to well-behaved demographic patterns, like stationary ones) that yields a fiscally balanced and Pareto optimal allocation of resources, thus resulting in an \textit{optimally balanced pay-as-you-go social security system}. 

This design is based on a consumption-loan deterministic overlapping generations model (therefore, the model does not deal with labor supply, production, or stochastic features) and the backward calculation algorithm from \textcite{Dognini_2025}, which has also been used to derive a solution for the Hahn problem \parencite{Hahn_1965, Bewley_1983} in monetary theory\footnote{Although applicable to monetary theory, the design of optimally balanced social security systems first motivated the results in \textcite{Dognini_2025}. As stated in its final paragraph: ``Finally, the specific economic problem that motivated the results of this paper should be clearly stated. Worldwide plummeting fertility rates are leading to a rapid demographic transition, putting high pressure on pay-as-you-go social security systems. The optimal design of these systems can be achieved through overlapping generations models and, specifically, through non-stationary consumption-loan models with heterogeneous households. However, this design requires the calculation of at least one efficient equilibrium.''}.

Furthermore, since the design is based on an overlapping generations model, Walras' law implies that we may regard the optimal contribution and replacement rates in terms of the balances of a fictitious account that keeps records of all contributions from and benefits paid to a given household. Therefore, this design leads to the well-known \textit{notional accounts} systems adopted, for instance, by Sweden and Italy, and a comprehensive survey on these systems was written by \textcite{HolzmannPalmer_2006}.

Finally, I highlight two fundamental aspects of these notional accounts systems that make them suitable for social security design: (i) there is an explicit link between contributions and benefits, thus mitigating labor market distortions (see the discussion on the ``efficiency gains from benefit-tax linkage'' in \textcite[pp. 155--161]{Auerbach_1987}); and (ii) parametric adjustments follow the demographic and productivity dynamics, thus placing a ``forward looking'' perspective on the system to keep it optimally balanced (see Section VI in \textcite[p. 17--21]{Diamond_2004} for a discussion on ``automatic and legislated adjustments'' of social security systems).

After this \hyperref[sec1]{Introduction}, the outline of the paper is as follows. \hyperref[sec2]{Section 2} defines the design of optimally balanced social security systems in a simple demographic setup and \hyperref[sec3]{Section 3} extends it to a general setup. \hyperref[sec4]{Section 4} then illustrates the design method in a simplified framework that builds on past and forecast demographic and productivity dynamic in Brazil, China, India, Italy, and the United States from 1950 to 2070. \hyperref[sec5]{Section 5} brings the concluding remarks. All proofs are stated in the \hyperref[appx]{Appendix}.

\section{Social security design in a simple demographic setup}\label{sec2}

Time is indexed by discrete periods $t\in\mathbb{N}$, there is a single perishable commodity in each period, and the government runs a pay-as-you-go social security system. The insured households are indexed by $h\in \mathbb{N}$ and are gathered in generations $G_{t}=\{h\in\mathbb{N}\mid h \textrm{ is born in period }t\}$, $t\geq0$, according to their period of birth and live for two periods, the one they are born and the next, except for those from generation $G_{0}$. Generation $G_{t}$ has a finite number of identical households given by $H_{t}\in\mathbb{N}$ and $\alpha_{t}=\alpha=H_{t+1}/H_{t}$, $t\geq0$, represents the constant demographic growth rate of the insured population.

Household $h\in G_{t}$, $t\geq1$, is defined through a consumption set $X^{h}=\mathbb{R}^{2}_{+}$, a common nonzero endowment $e^{h}=(e^{h}_{t},e^{h}_{t+1})=(e,0)\in\mathbb{R}^{2}_{+}$ and a common utility function $u^{h}:\mathbb{R}^{2}_{+}\rightarrow\mathbb{R}$, $u^{h}(c_{t},c_{t+1})=u(c_{t},c_{t+1})=\sqrt{c_{t}}+\sqrt{c_{t+1}}$. Household $h\in G_{0}$ is defined through a consumption set $X^{h}=\mathbb{R}_{+}$, a zero endowment $e^{h}=0\in\mathbb{R}_{+}$ and a common utility function $u^{h}:\mathbb{R}_{+}\rightarrow\mathbb{R}$, $u^{h}(c_{1})=\sqrt{c_{1}}$. 

The pay-as-you-go social security system is given by $\{s^{h}\}_{h\geq1}$, with $s^{h}=(s^{h}_{t},s^{h}_{t+1})\in\mathbb{R}^{2}$, $s^{h}+e^{h}\in\mathbb{R}^{2}_{+}$, $h\in G_{t}$, $t\geq1$, and $s^{h}=s^{h}_{1}\in\mathbb{R}$, $s^{h}+e^{h}=s^{h}\in\mathbb{R}_{+}$, $h\in G_{0}$. I proceed with the following definition.

\begin{definition}\label{defOptmBalanced}
    The social security system $\{s^{h}\}_{h\geq1}$ is \textit{balanced} if $\sum_{t\geq0}\sum_{h\in G_{t}} s^{h}=0$, and is \textit{optimally balanced} if, in addition, $u^{h}(e^{h}+s^{h})\geq u^{h}(e^{h})$, $h\geq1$, and the allocation $\{e^{h}+s^{h}\}_{h\geq1}$ is Pareto optimal.
\end{definition}

The minimum one can ask of a pay-as-you-go social security system is for it to be balanced. This means that contributions collected from the younger generations cover, but do not exceed, the benefits promised to the older generations (i.e., there are always enough resources to honor social security obligations, and the government sets contributions to the system only for this purpose). Furthermore, an optimally balanced system is one that makes no household worse off than it would be without it and has exhausted all unanimous possibilities for welfare increases (i.e., there is no possible Pareto improvement).

Notice that the government can design many balanced social security systems. For instance, let $\{s^{h}_{\sigma}\}_{h\geq1}$, $\sigma\in[0,1]$, be given by $s^{h}_{\sigma}=(-\sigma e,\alpha\sigma e)$, $t\geq1$, and $s^{h}_{\sigma}=s^{h}_{\sigma 1}= \alpha\sigma e$, $h\in G_{0}$. Traditionally, $\sigma\in[0,1]$ is called the \textit{contribution rate} and $\alpha\sigma>0$ the \textit{replacement rate}. Then,
\begin{eqnarray*}
    \sum_{t\geq0}\sum_{h\in G_{t}} s^{h}_{\sigma}=\sum_{t\geq1}\biggr(\sum_{h\in G_{t-1}} s^{h}_{\sigma t}+\sum_{h\in G_{t}} s^{h}_{\sigma t}\biggr)=\sum_{t\geq1}(H_{t-1}\alpha\sigma e-H_{t}\sigma e)=0,
\end{eqnarray*}
so that $\{s^{h}_{\sigma}\}_{h\geq1}$, $\sigma\in[0,1]$, is balanced. Since
\begin{eqnarray*}
    \frac{\alpha}{1+\alpha}=\textrm{argmax}_{\sigma\in[0,1]}\sqrt{(1-\sigma)e}+\sqrt{\alpha\sigma e},
\end{eqnarray*}
the system $\{s^{h}_{\alpha/(1+\alpha)}\}_{h\geq1}$ Pareto dominates $\{s^{h}_{\sigma}\}_{h\geq1}$, $\sigma\in[0,\alpha/(1+\alpha))$. Thus, the stationary system $\{s^{h}_{\sigma}\}_{h\geq1}$, $\sigma\in[0,\alpha/(1+\alpha))$, cannot be optimally balanced, and the government should not adopt it when designing the pay-as-you-go social security.

The obvious candidate at this point for designing our optimally balanced system is $\{s^{h}_{\alpha/(1+\alpha)}\}_{h\geq1}$. However, a closer look at Definition \ref{defOptmBalanced} reveals that we must compare $\{e^{h}+s^{h}_{\alpha/(1+\alpha)}\}_{h\geq1}$ to all possible allocations, not just to stationary ones. 

Therefore, up to this point, we do know that $\{s^{h}_{\sigma}\}_{h\geq1}$, $\sigma\in[0,\alpha/(1+\alpha))$ is not optimally balanced, but we do not know if $\{s^{h}_{\alpha/(1+\alpha)}\}_{h\geq1}$ is. To find this out, we need to redraft the insured population we have been working with as a fictitious competitive infinite-horizon economy.

Let $\mathcal{E}$ be an overlapping generations economy with fiat money composed of all the generations in our insured population, with household $h\in G_{0}$ holding a common amount of money $m^{h}=1$. Household $h\in G_{t}$, $t\geq1$, has a Walrasian demand function $x^{h}:\mathbb{R}^{2}_{++}\rightarrow \mathbb{R}^{2}_{+}$, $x^{h}(p_{t},p_{t+1})=(x^{h}_{t}(p_{t},p_{t+1}),x^{h}_{t+1}(p_{t},p_{t+1}))$, $(p_{t},p_{t+1})\in\mathbb{R}^{2}_{++}$, given by 
\begin{eqnarray*}
    x^{h}(p_{t},p_{t+1})=\textrm{argmax}_{c\in\mathbb{R}^{2}_{+}}& u^{h} (c) \\
    \textrm{ s.t. }& (p_{t},p_{t+1})\cdot (c-e^{h})\leq0,
\end{eqnarray*}
which can be simplified due to the common utility function and endowment to
\begin{eqnarray*}
    x^{h}(p_{t},p_{t+1})=x(p_{t},p_{t+1})=\biggr(\frac{p_{t+1}e}{p_{t}+p_{t+1}},\frac{p_{t}^{2}e}{p_{t+1}(p_{t}+p_{t+1})}\biggr).
\end{eqnarray*}
Also, the Walrasian demand of household $h\in G_{0}$, $x^{h}:\mathbb{R}^{2}_{++}\rightarrow \mathbb{R}_{+}$, is given by
\begin{eqnarray*}
    x^{h}(p_{m},p_{1})=\textrm{argmax}_{c\in\mathbb{R}_{+}}& u^{h} (c) \\
    \textrm{ s.t. }& p_{1}(c-e^{h})\leq p_{m}m^{h},
\end{eqnarray*}
which can be simplified due to the common utility function, endowment, and fiat money holdings to $x^{h}(p_{m},p_{1})=p_{m}/p_{1}$. A \textit{monetary equilibrium} of the economy $\mathcal{E}$ is a sequence $p=(p_{m},p_{1},\ldots)=(1,p_{1},\ldots)\in\mathbb{R}^{\infty}_{++}$ that satisfies market clearing in all periods, i.e.,
\begin{eqnarray}\label{eqMarketClearing}
   \sum_{t\geq0}\sum_{h\in G_{t}}x^{h}(p_{t},p_{t+1})-e^{h}=0,
\end{eqnarray}
with $p_{0}=p_{m}=1$. Notice that if we have a monetary equilibrium $p\in\mathbb{R}^{\infty}_{++}$ of our fictitious economy $\mathcal{E}$, then we can use it to design a social security system $\{s^{h}_{p}\}_{h\geq1}$ by letting $s^{h}_{p}=x^{h}(p_{t},p_{t+1})-e^{h}$, $h\in G_{t}$, $t\geq0$, and the market clearing equations (\ref{eqMarketClearing}) ensure that it is balanced.

It is convenient to define $\phi:\mathbb{R}_{++}\rightarrow\mathbb{R}_{++}$, $\phi(r)=r/(1+r)$. Let $r_{t}=p_{t}/p_{t+1}$, $t\geq0$, so that $\{r_{t}\}_{t\geq0}$ completely defines the sequence $p=(1,p_{1},p_{2},\ldots)\in\mathbb{R}^{\infty}_{++}$. Then, the market clearing equations (\ref{eqMarketClearing}) can be written as
\begin{eqnarray*}
    \alpha_{0}\phi(r_{1})&=&r_{0}\\
    \alpha_{1}\phi(r_{2})&=&r_{1}\phi(r_{1})\\
    \alpha_{2}\phi(r_{3})&=&r_{2}\phi(r_{2})\\
    &\vdots&
\end{eqnarray*}
with $\alpha_{t}=\alpha>0$, $t\geq0$. A careful analysis of this system of infinite equations (e.g., \textcite{Gale_1973}) reveals that $p=(1,p_{1},\ldots)\in\mathbb{R}^{\infty}_{++}$ is a monetary equilibrium if, and only if, $r_{1}\in(0,\alpha]$, $r_{0}=\alpha\phi(r_{1})$ and $r_{t+1}=\phi^{-1}(r_{t}\phi(r_{t})/\alpha)$, $t\geq1$. In particular, $r_{1}\in(0,\alpha]$ indexes all possible monetary equilibria.

It is well-known that the First Welfare Theorem does not hold in overlapping generations economies (e.g., \textcite{Samuelson_1958, CassYaari_1966, Shell_1971}) and so we cannot ensure that all allocations derived after $r_{1}\in (0,\alpha]$ are Pareto optimal. Nonetheless, as pointed out by \textcite[p. 286, Definition 2.5]{BalaskoShell_1980}, all such monetary equilibria define \textit{weakly Pareto optimal} allocations (i.e., there is no possible redistribution over a \textit{finite} number of periods that leads to a Pareto improvement). Therefore, we are half-way to an optimally balanced social security system.

Next, let $\psi:\mathbb{R}_{++}\rightarrow\mathbb{R}_{++}$ be given by
\begin{eqnarray*}
    \psi(r)=\frac{r+\sqrt{r^{2}+4r}}{2},
\end{eqnarray*}
so that $\psi(r\phi(r))=r$, $r>0$, and $f_{t}:\mathbb{R}_{++}\rightarrow\mathbb{R}_{++}$, $t\geq1$, be given by $f_{t}(r)=\psi(\alpha_{t}\phi(r))$. The next result follows the backward calculation algorithm from \textcite{Dognini_2025}\footnote{Although Proposition \ref{propOptReturnRates1Dim} is similar to Theorem 12 in \textcite{Dognini_2025}, we cannot simply use the latter since our fictitious economy does not satisfy one of its assumptions (namely, Assumption 3) due to $\sum_{h\in G_{t}}e^{h}_{t+1}=0$, $t\geq1$.}.

\begin{proposition}\label{propOptReturnRates1Dim}
Let $\mathcal{E}$ be the overlapping generations economy with fiat money above and $0<\alpha_{\min}\leq \alpha_{t}\leq \alpha_{\max}$, $t\geq0$. Then, there is a unique Pareto optimal monetary equilibrium $\tilde{p}=(1,\tilde{p}_{1},\ldots)\in\mathbb{R}^{\infty}_{++}$, with $\tilde{r}_{1}=\tilde{p}_{1}/\tilde{p}_{2}=\lim_{t\rightarrow\infty} f_{1}\circ \ldots \circ f_{t}(L)$, $L>0$. 
\end{proposition}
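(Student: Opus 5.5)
The plan is to work entirely with the return sequence $\{r_t\}$. Market clearing reads $\alpha_{t}\phi(r_{t+1})=r_{t}\phi(r_{t})$ for $t\geq1$, and $r_0=\alpha_0\phi(r_1)$. Applying $\psi$ to both sides and using $\psi(r\phi(r))=r$, this is equivalent to the backward recursion $r_t=\psi(\alpha_t\phi(r_{t+1}))=f_t(r_{t+1})$. Monetary equilibria therefore correspond exactly to positive sequences with $r_t=f_t(r_{t+1})$ for all $t\geq1$. Each $f_t$ is a strictly increasing map of $\mathbb{R}_{++}$ into $(0,\psi(\alpha_t))\subset(0,\psi(\alpha_{\max}))$, since $\phi<1$. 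So $r_1=f_1\circ\cdots\circ f_t(r_{t+1})$ for every $t$, and every equilibrium has $r_t\in(0,\psi(\alpha_{\max}))$ for all $t$.

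\textbf{Step 1: convergence of the backward iterates.} Set $g_t=f_1\circ\cdots\circ f_t$. I want to show that $g_t(L)$ converges to a limit $\tilde r_1>0$ that does not depend on $L$. Monotonicity alone gives only a nested family of intervals. Each $g_{t+1}(\mathbb{R}_{++})=g_t(f_{t+1}(\mathbb{R}_{++}))\subset g_t(\mathbb{R}_{++})$, so the image intervals $I_t=g_t(\mathbb{R}_{++})$ shrink. Equivalently, for fixed $L$ the values $g_t(L)$ and $g_t(L')$ trap each other. The key task is to show that the length of $I_t$ goes to zero.

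\textbf{Step 2: contraction of $\log f_t$.} This is the main obstacle. I would work in logarithmic coordinates and bound the elasticity
\[
\frac{r f_t'(r)}{f_t(r)}=\frac{r\phi'(r)}{\phi(r)}\cdot\frac{y\psi'(y)}{\psi(y)}\Big|_{y=\alpha_t\phi(r)},
\]
where $r\phi'(r)/\phi(r)=1/(1+r)<1$ and $y\psi'(y)/\psi(y)=(1+\phi(\psi(y)))^{-1}\le 1$.

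The elasticity is close to $1$ only when both $r$ and $\alpha_t\phi(r)$ are near $0$, so the contraction is not uniform on all of $\mathbb{R}_{++}$. To handle this, I would use the boundedness of the images. After the first application, all arguments lie in $(0,\psi(\alpha_{\max}))$. I would also show a positive lower bound: $f_t(r)\geq f_t(r_*)$ whenever $r\ge r_*$. The lower bound comes from the fact that a sequence with $r_{t+1}\to0$ cannot satisfy the forward equation, because $r\phi(r)\approx r^2$ makes it collapse.

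On a compact interval $[a,b]\subset\mathbb{R}_{++}$ with $a\geq \psi(\alpha_{\min}\phi(a))$, which can be chosen self-mapping for small $a$, the elasticity is at most $\kappa<1$. This would give $\log|I_t|\lesssim \kappa^{t}$. Hence $g_t(L)\to\tilde r_1$ for every $L>0$.

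If a uniform $\kappa$ fails near $0$, my fallback is to pick $a$ via the fixed point of $r\mapsto\psi(\alpha_{\min}\phi(r))$. That fixed point is positive, since it solves $\alpha_{\min}\phi(r)=r\phi(r)$, i.e.\ $r=\alpha_{\min}$. So $[\alpha_{\min},\psi(\alpha_{\max})]$ is invariant under every $f_t$, and on it the elasticity is at most $1/(1+\alpha_{\min})<1$.

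\textbf{Step 3: existence, uniqueness, and identification of the optimum.} Starting from $\tilde r_1$, I define $\tilde r_t$ by $\tilde r_t=\lim_s f_t\circ\cdots\circ f_s(L)$. These values are consistent, with $\tilde r_t=f_t(\tilde r_{t+1})$, and they lie in $[\alpha_{\min},\psi(\alpha_{\max})]$. Then $r_0=\alpha_0\phi(\tilde r_1)$ and $\tilde p$ follow, so $\tilde p$ is a monetary equilibrium.

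For Pareto optimality I apply the Balasko--Shell/Cass criterion: a weakly optimal equilibrium with bounded endowments is optimal iff $\sum_t 1/p_t=\infty$. Here $p_t=\prod_{s<t}r_s^{-1}$, normalized per capita by $H_t$ because population grows, i.e.\ $\sum_t 1/(H_tp_t)$. I expect $r_t\ge\alpha_{\min}$-type bounds to keep the per-capita normalized prices from growing, so that the series diverges.

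For uniqueness, take any other equilibrium with $r_1\ne\tilde r_1$. By Step 2 its backward-implied $r_{t}$ must leave the invariant interval, with $r_t\to0$. The implied rate of return then falls below population growth, the per-capita Cass series converges, and the allocation is not optimal. Checking the Cass criterion in the present setting, with zero old-age endowments and growing $H_t$, is where I would take the most care.
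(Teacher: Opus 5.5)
Your fallback argument is sound and takes a genuinely different route from the paper's. The argument is log-contraction of the $f_t$ on $J=[\alpha_{\min},\psi(\alpha_{\max})]$, which every $f_t$ maps into itself, so that the nested compact sets $f_1\circ\cdots\circ f_t(J)$ shrink to a point. The paper uses no derivative estimate. It takes as given that the equilibrium values of $r_1$ form an interval $(0,\tilde r_1]$ and that the forward map preserves order. Hence the equilibrium at $\tilde r_1$ strictly Pareto-dominates every other monetary equilibrium, since each household's utility increases in its own return. It then iterates backward from a point $L_-$ with $f_t(L_-)\geq L_-$ and from $L_+=\psi(\alpha_{\max})$ with $f_t(L_+)\leq L_+$. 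The resulting iterates are monotone, hence convergent, and their limits are equilibria with $r_t\geq\alpha_{\min}$. These limits are Pareto optimal by the Cass criterion, so they equal $\tilde r_1$ by the domination argument, and every other $L$ is sandwiched between them. Your route gives independence of $L$ and a geometric rate directly, and it constructs the maximal equilibrium instead of assuming the supremum of equilibrium values is attained. The paper's route gets uniqueness from Pareto domination, so it needs only the sufficiency half of the Cass criterion.

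Several steps are wrong as written and need repair.

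(1) The goal of Step 1 is false. Since $f_t(r)\to0$ as $r\to0$, every $I_t$ has infimum $0$. Moreover $\bigcap_t I_t$ contains the value $r_1=g_t(r_{t+1})$ of every monetary equilibrium, i.e.\ all of $(0,\tilde r_1]$. That is exactly the continuum of non-optimal equilibria your uniqueness step uses, so $|I_t|\to\tilde r_1$, not $0$. Only images of an invariant compact interval shrink.

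(2) The elasticity of $\psi$ at $y$ is $(1+\rho)/(2+\rho)$ with $\rho=\psi(y)$, not $(1+\phi(\rho))^{-1}$. Near zero, $f_t(r)\approx\sqrt{\alpha_t r}$ has elasticity $1/2$. In fact the elasticity of every $f_t$ is at most $(1+\psi(\alpha_{\max}))/(2+\psi(\alpha_{\max}))<1$ on all of $\mathbb{R}_{++}$. You need this global bound to cover starting values $L<\alpha_{\min}$, which $J$ does not reach. Alternatively, use the invariant interval $[\min\{L,\alpha_{\min}\},\psi(\alpha_{\max})]$. The correct self-mapping condition is $a\leq\psi(\alpha_{\min}\phi(a))$, i.e.\ $a\leq\alpha_{\min}$, the reverse of the inequality you wrote.

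(3) It is false that a sequence with $r_t\to0$ cannot satisfy the forward equation. Every equilibrium with $r_1<\tilde r_1$ does exactly that; what is true is only that backward iteration pushes away from $0$.

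(4) The bound $r_t\geq\alpha_{\min}$ alone does not keep $H_tp_t$ bounded, because $H_{t+1}p_{t+1}/(H_tp_t)=\alpha_t/r_t$ can exceed $1$. Use the identity $1/(H_tp_t)=\phi(r_t)/H_0$, obtained by iterating market clearing. It gives divergence of the Cass series on $J$. Via $\phi(r_{n+1})\leq (r_n/\alpha_{\min})\phi(r_n)$, it also gives geometric convergence once some $r_n<\alpha_{\min}$.

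(5) For uniqueness you can avoid the necessity half of the Cass criterion, which you rightly flag as delicate here, as the paper does. The forward map is increasing, so an equilibrium with $r_1<\tilde r_1$ has $r_t<\tilde r_t$ for all $t$ and is strictly Pareto-dominated by $\tilde p$. An equilibrium with $r_1>\tilde r_1$ would have $r_t>\tilde r_t\geq\alpha_{\min}$ for all $t$, so it stays in $J$ and is excluded by your Step 2.
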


Proposition \ref{propOptReturnRates1Dim} furnishes a way of calculating the unique Pareto optimal monetary equilibrium of our fictitious economy $\mathcal{E}$ with a bounded (not necessarily stationary) demographic dynamic. The intuition behind the result is straightforward. Notice that the system of (infinite) market clearing equations can be ``approximated'' by the (finite) system
\begin{eqnarray*}
    \alpha_{0}\phi(r_{1})&=&r_{0}\\
    \alpha_{1}\phi(r_{2})&=&r_{1}\phi(r_{1})\\
    &\vdots&\\
    \alpha_{t}\phi(L)&=&r_{t}\phi(r_{t}),
\end{eqnarray*}
which can also be written, by applying $\psi(\cdot)$ from the second equation onward, as
\begin{eqnarray*}
    f_{0}(r_{1})&=&\psi(r_{0})\\
    f_{1}(r_{2})&=&r_{1}\\
    &\vdots&\\
    f_{t}(L)&=&r_{t}.
\end{eqnarray*}
Solving backward these equations yields $r_{1}=f_{1}\circ \ldots \circ f_{t}(L)$. Therefore, Proposition \ref{propOptReturnRates1Dim} states that this solution correctly approaches the unique Pareto optimal monetary equilibrium of $\mathcal{E}$ as $t\rightarrow\infty$, for all $L>0$.

Next, the fact that we are dealing with the stationary case (i.e., $\alpha_{t}=\alpha$, $t\geq0$) makes the calculation of the Pareto optimal monetary equilibrium trivial. Notice that $f_{t}(\alpha)=f(\alpha)=\alpha$, $t\geq1$, and $\tilde{r}_{1}=\lim_{t\rightarrow\infty} f_{1}\circ \ldots \circ f_{t}(\alpha)=\lim_{t\rightarrow\infty}f^{t}(\alpha)=\alpha$. Then, the unique Pareto optimal monetary equilibrium is 
\begin{eqnarray*}
\tilde{p}=\biggr(1,\frac{1+\alpha}{\alpha^{2}},\frac{1+\alpha}{\alpha^{3}},\ldots\biggr),    
\end{eqnarray*}
and the corresponding social security design is
\begin{eqnarray*}
    s^{h}_{\tilde{p}}=x^{h}(\tilde{p}_{t},\tilde{p}_{t+1})-e^{h}=\biggr(\frac{-\alpha e}{1+\alpha},\frac{\alpha^{2} e}{1+\alpha} \biggr)=s^{h}_{\alpha/(1+\alpha)}
\end{eqnarray*}
for $h\in G_{t}$, $t\geq1$. Clearly, $s^{h}_{\tilde{p}}=s^{h}_{\alpha/(1+\alpha)}$, $h\in G_{0}$. We conclude that our optimally balanced social security system is indeed $\{s^{h}_{\alpha/(1+\alpha)}\}_{h\geq1}$.

Proposition \ref{propOptReturnRates1Dim} also tells us how to reform the social security system $\{s^{h}_{\alpha/(1+\alpha)}\}_{h\geq1}$ after a marginal change in the demographic growth rate $\alpha_{t}$, $t\geq1$, due to the following corollary.

\begin{corollary}\label{corMarginalDemChange}
Let $\mathcal{E}$ be the overlapping generations economy with fiat money above and $\alpha_{t}=\alpha>0$, $t\geq0$. Then,
\begin{eqnarray*}
      \frac{\partial \tilde{r}_{1}}{\partial \alpha_{t}}=\frac{1+\alpha}{(2+\alpha)^{t}},
\end{eqnarray*}
for $t\geq1$.
\end{corollary}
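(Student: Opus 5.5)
The plan is to reduce $\tilde{r}_{1}$ to a \emph{finite} composition of the maps $f_{s}$ whenever only one demographic rate is perturbed, and then apply the chain rule at the stationary fixed point $r=\alpha$.

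\textbf{Step 1 (reduction to a finite composition).} Fix $t\geq1$ and consider the demographic sequence with $\alpha_{s}=\alpha$ for $s\neq t$ and $\alpha_{t}$ free in a neighbourhood of $\alpha$. The bounds $0<\alpha_{\min}\leq\alpha_{s}\leq\alpha_{\max}$ still hold, so Proposition \ref{propOptReturnRates1Dim} applies. Since it holds for every $L>0$, I take $L=\alpha$. For $s>t$ we have $f_{s}=f$ with $f(\alpha)=\psi(\alpha\phi(\alpha))=\alpha$, because $\psi(r\phi(r))=r$. Hence $f_{t+1}\circ\cdots\circ f_{T}(\alpha)=\alpha$ for all $T>t$, and so
\begin{eqnarray*}
\tilde{r}_{1}=\lim_{T\rightarrow\infty}f_{1}\circ\cdots\circ f_{T}(\alpha)=f_{1}\circ\cdots\circ f_{t}(\alpha).
\end{eqnarray*}
Here $f_{t}(r)=\psi(\alpha_{t}\phi(r))$ is the only map depending on $\alpha_{t}$. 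This removes any need to justify interchanging the derivative with the limit, which I expected to be the main obstacle. $\tilde{r}_{1}$ is now a smooth finite composition, since $\psi$ and $\phi$ are smooth on $\mathbb{R}_{++}$.

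\textbf{Step 2 (local derivatives at $r=\alpha$).} Since $\psi$ inverts $r\mapsto r^{2}/(1+r)$, whose derivative is $r(2+r)/(1+r)^{2}$, we get $\psi'(\alpha\phi(\alpha))=(1+\alpha)^{2}/(\alpha(2+\alpha))$. Also $\phi'(\alpha)=1/(1+\alpha)^{2}$ and $\phi(\alpha)=\alpha/(1+\alpha)$. Evaluating at $r=\alpha$ and $\alpha_{t}=\alpha$ gives
\begin{eqnarray*}
\frac{\partial f_{s}}{\partial r}=\psi'\cdot\alpha\phi'(\alpha)=\frac{1}{2+\alpha},\qquad \frac{\partial f_{t}}{\partial \alpha_{t}}=\psi'\cdot\phi(\alpha)=\frac{1+\alpha}{2+\alpha}.
\end{eqnarray*}

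\textbf{Step 3 (chain rule).} At $\alpha_{t}=\alpha$, every intermediate value $f_{s}\circ\cdots\circ f_{t}(\alpha)$ equals $\alpha$. Differentiating the composition with respect to $\alpha_{t}$ therefore gives the product of the $t-1$ factors $\partial f_{s}/\partial r$ for $s=1,\ldots,t-1$ and the single factor $\partial f_{t}/\partial\alpha_{t}$:
\begin{eqnarray*}
\frac{\partial\tilde{r}_{1}}{\partial\alpha_{t}}=\biggr(\frac{1}{2+\alpha}\biggr)^{t-1}\frac{1+\alpha}{2+\alpha}=\frac{1+\alpha}{(2+\alpha)^{t}},
\end{eqnarray*}
which is the claimed formula.
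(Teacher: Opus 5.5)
Your proposal is correct and follows essentially the same route as the paper: take $L=\alpha$, use that the stationary maps $f_{s}$, $s>t$, fix $\alpha$ so that $\tilde{r}_{1}=f_{1}\circ\cdots\circ f_{t}(\alpha)$, and apply the chain rule with $f'(\alpha)=1/(2+\alpha)$ and $\psi'(\alpha\phi(\alpha))\phi(\alpha)=(1+\alpha)/(2+\alpha)$. Your observation that $f_{1}\circ\cdots\circ f_{T}(\alpha)$ is eventually constant in $T$ justifies this step more cleanly than the paper, which instead passes the limit inside the composition.
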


Corollary \ref{corMarginalDemChange} states that changes in future growth rates \textit{reverberate} all the way down to the first period return rate, and the entire design of the corresponding social security system must be reformed in order to keep it optimally balanced. For instance,
\begin{eqnarray*}
    \frac{\partial x^{h}_{1}(\tilde{r}_{1},1)}{\partial \alpha_{t}}=-\frac{e}{(1+\alpha)^{2}}\frac{\partial \tilde{r}_{1}}{\partial \alpha_{t}}=-\frac{e}{(1+\alpha)(2+\alpha)^{t}},
\end{eqnarray*}
for $h\in G_{1}$, $t\geq1$. Therefore, an increase in future demographic growth rates leads to an increase in the contribution rates of generation $G_{1}$ and in the replacement rates of generation $G_{0}$. Similarly, a decrease in future demographic growth rates leads to a decrease in the contribution rates of generation $G_{1}$ and in the replacement rates of generation $G_{0}$. 

Although the discussions and results (i.e., Proposition \ref{propOptReturnRates1Dim} and Corollary \ref{corMarginalDemChange}) have been conducted so far in this simple two-period overlapping generations setup, all fundamental aspects of the design of optimally balanced pay-as-you-go social security systems can already be highlighted.

The first aspect is that optimally balanced systems are a good benchmark for our design goal. They not only ensure that no household is worse off due to the system and that contributions always cover benefits (and only cover them), but also that there is no other system that leads to a Pareto improvement (i.e., there are no unanimous welfare gains ``left on the table'').

Then, we come to our second fundamental aspect. In general, it is not easy to find optimally balanced systems (especially in non-stationary economies with heterogeneous households), as one can infer from the previous discussion and results. A way to achieve this is to conceive a fictitious overlapping generations economy $\mathcal{E}$ formed by the insured population and then look for a Pareto optimal equilibrium of it. 

An attentive look at Definition \ref{defOptmBalanced} reveals that this is a reasonable approach since: (i) utility maximization implies that no household is worse off due to the system; (ii) market clearing equations imply that the system is balanced and weakly Pareto optimal; and (iii) Pareto optimality implies that the system is optimally balanced. Briefly stated, we use this fictitious market mechanism to design an optimally balanced pay-as-you-go system.   

And how does one actually find this Pareto optimal equilibrium, in general? The answer is given by the \textcite{Cass_1972} criterion and the backward calculation algorithm in \textcite{Dognini_2025}. The key point is to work with well-behaved tail economies, and the next section aims precisely to define these. Once the tail economies are given, one ``moves'' backward, solving equilibrium equations ``step by step'' in order to approximate at least one Pareto optimal equilibrium of the fictitious economy $\mathcal{E}$. 

The third fundamental aspect is that the design of optimally balanced systems is forward looking (even though the calculation algorithm goes backward!). In a given sense, today's pensions must be defined by asking how much the younger generations are willing to contribute to the system in order to ensure fair pensions for themselves in the future. But this answer, in turn, depends on how much the next generations will be willing to contribute, and so forth in a never-ending pay-as-you-go system.

Therefore, if the demographic dynamic is altered (in the short or long run), the equilibrium equations imply a change in all pensions and contributions of the social security system from day one (see Corollary \ref{corMarginalDemChange}). We say that future demographic shifts \textit{reverberate} on today's social security rules because, for example, if future return rates of the system are lower than previously expected, current pensions and contributions must be adjusted in order to attain a fair distribution of the burden. Otherwise, it is just the younger generations paying more for each time less.

This forward-looking design is in sharp contrast to how the government and common sense view social security. The former is normally concerned with the short-run balance of the system and leaves all possible reform burdens to the next administration, since adjusting pensions and lowering contribution rates is hardly a preferential path (considering Corollary \ref{corMarginalDemChange} and a country with an ongoing demographic contraction). The latter usually views social security with a backward-looking eye: ``People have always retired at 55 with a formidable pension; of course, it's going to be the same for me!''

The fourth and final fundamental aspect is the following. Since this design of social security systems uses a fictitious market mechanism and, in particular, balanced budgets under common prices (i.e., Walras' law), it is natural to interpret the series of contributions and benefits of a given household as the balance of a \textit{notional account} that is adjusted over time according to the equilibrium \textit{return rates} of the fictitious economy $\mathcal{E}$. This is why these are called \textit{notional accounts social security systems}\footnote{The literature also adopts the term non-financial defined contribution (NDC) systems \parencite{BarrDiamond_2006,HolzmannPalmer_2006}. However, since optimality usually requires that both the contribution and replacement rates must be adjusted simultaneously, the term \textit{notional accounts systems} is preferable.}.

Furthermore, although our focus in this section and, more broadly, in this paper, is on demographic transitions, it is clear that the adjustment of these notional accounts is determined not only by the demographic dynamic but also by the productivity (i.e., endowment) dynamic and the preferences of each generation. Also, incorporating preferences into social security design is an important milestone of this paper.

Finally, the main shortfalls of this design should also be clearly stated. The first is that the demographic and productivity dynamics are exogenous and, therefore, we are assuming that they are not affected by the design of the social security system itself. Nonetheless, as far as we are dealing with incremental reforms and continue to assess the optimality of the system, this concern becomes less relevant. Second, the design depends heavily on demographic and productivity forecasts. The uncertainty of forecasts, therefore, will reflect on the predictability of the systems rules. One could think of some sort of ``buffer fund'' or ``compensation mechanism'' to share the burden of this uncertainty between generations, but I will not take on this matter in this paper.

The next section reveals how optimally balanced social security systems can be designed in a general demographic setup.

\section{Social security design in a general demographic setup}\label{sec3}

This section closely follows the definitions and results in \textcite{Dognini_2025}. Time is indexed by $t\in\mathbb{N}_{0}$, there is a single perishable commodity in each period, and the government runs a pay-as-you-go social security system. The insured households are indexed by $h\in\mathbb{N}$ and are gathered in generations $G_{t}=\{h\in\mathbb{N}\mid h\textrm{ is born in period }t\}$, $t\geq0$, and live for $S\geq2$ periods ($S$ stands for ``span of life''). Generation $G_{t}$ has a finite number of households given by $H_{t}\in\mathbb{N}$ and $\alpha_{t}=H_{t+1}/H_{t}$ represents the demographic growth rate.

Household $h\in G_{t}$ is defined by a consumption set $X^{h}=\mathbb{R}^{S}_{+}$, a nonzero endowment $e^{h}=(e^{h}_{t},\ldots,e^{h}_{t+S})\in\mathbb{R}^{S}_{+}$ (with $\sum_{t\geq0}\sum_{h\in G_{t}}e^{h}\in\mathbb{R}^{\infty}_{++}$) and a continuous, non-decreasing, semi-strictly quasiconcave\footnote{Utility function $u(\cdot)$ is \textit{semi-strictly quasiconcave} (see \textcite[p. 87, Definition 2]{ArrowHahn_1971}), if $u(c^{\prime})\geq u(c)$ implies $u(\alpha c^{\prime}+(1-\alpha) c)\geq u(c)$, $0\leq \alpha \leq 1$, and if $u(c^{\prime})>u(c)$ implies $u(\alpha c^{\prime}+(1-\alpha) c)>u(c)$, $0< \alpha \leq1$. In particular, if $u(\cdot)$ is semi-strictly quasiconcave, then it is quasiconcave.} utility function $u^{h}:\mathbb{R}^{S}_{+}\rightarrow\mathbb{R}$ without local maxima. Finally, all these households form a fictitious overlapping generations economy $\mathcal{E}$.

The first remark about $\mathcal{E}$ is that, unlike Section \ref{sec2}, all households live for the same number of periods $S\geq2$. This can be seen as a model of a \textit{long-standing} social security system and is a convenient way of making money holdings endogenous in our fictitious economy $\mathcal{E}$ (see \textcite{Dognini_2025} for a comprehensive discussion on this matter).

The second remark is that $S\geq2$ actually models the \textit{maximum lifespan} of households. This is because for $h\in G_{t}$, $t\geq0$, and $2\leq S^{\prime}<S$, we can always set $e^{h}_{t+s}=0$, $S^{\prime}<s\leq S$, and $u^{h}(c_{t},\ldots,c_{t+S})=u^{h}(c_{t},\ldots,c_{t+S^{\prime}})$ (i.e., $u^{h}(\cdot)$ does not depend on $c_{t+s}$, $S^{\prime}<s\leq S$), thereby representing a household with a shorter lifespan.

The third remark is that the economy $\mathcal{E}$ is equivalent to a two-period model $\tilde{\mathcal{E}}$ due to the relabeling algorithm conceived by \textcite[pp. 319-321]{BalaskoCassShell_1980}. This is illustrated in the following example.

\begin{example}\label{ex1}
Let $S=3$, so that $e^{h}=(e^{h}_{t},e^{h}_{t+1},e^{h}_{t+2})\in\mathbb{R}^{3}_{+}$ and $u^{h}:\mathbb{R}^{3}_{+}\rightarrow\mathbb{R}$, $u^{h}(c)=u^{h}(c_{t},c_{t+1},c_{t+2})$, $c\in\mathbb{R}^{3}_{+}$, $h\in G_{t}$, and $t\geq0$. Next, build the ``relabeled'' economy $\tilde{\mathcal{E}}$ by defining generation $\tilde{G}_{t}=G_{2t}\cup G_{2t+1}$, $t\geq0$. Then, two successive periods of $\mathcal{E}$ turn into a single period of $\tilde{\mathcal{E}}$, and the demographic growth rates become
    \begin{eqnarray}\label{eqGrowthRateBCS}
        \tilde{\alpha}_{t}=\frac{\tilde{H}_{t+1}}{\tilde{H}_{t}}=\frac{H_{2t+2}+H_{2t+3}}{H_{2t}+H_{2t+1}}=\frac{\alpha_{2t}\alpha_{2t+1}(1+\alpha_{2t+2})}{1+\alpha_{2t}},
    \end{eqnarray}
for $t\geq0$. Consequently, there are two perishable goods in each period of $\tilde{\mathcal{E}}$ and the endowments $\tilde{e}^{h}\in\mathbb{R}^{4}_{+}$, $h\geq1$, are given by
    \begin{eqnarray*}
        \tilde{e}^{h}=(\tilde{e}^{h}_{t},\tilde{e}^{h}_{t+1})=\begin{cases}
            (e^{h}_{2t},e^{h}_{2t+1},e^{h}_{2t+2},0),\textrm{ if } h\in G_{2t}\\
            (0,e^{h}_{2t+1},e^{h}_{2t+2},e^{h}_{2t+3}),\textrm{ if } h\in G_{2t+1}
        \end{cases}
    \end{eqnarray*}
for $h\in \tilde{G}_{t}$, $t\geq0$. Furthermore, utility function $\tilde{u}^{h}:\mathbb{R}^{4}_{+}\rightarrow\mathbb{R}$ is given by
    \begin{eqnarray*}
        \tilde{u}^{h}(\tilde{c}_{t},\tilde{c}_{t+1})=\begin{cases}
            u^{h}(\tilde{c}_{t1},\tilde{c}_{t2},\tilde{c}_{(t+1)1}),\textrm{ if } h\in G_{2t}\\
            u^{h}(\tilde{c}_{t2},\tilde{c}_{(t+1)1},\tilde{c}_{(t+1)2}),\textrm{ if } h\in G_{2t+1}
        \end{cases}
    \end{eqnarray*}
for $h\in \tilde{G}_{t}$, $t\geq0$. Notice that household $h\in \tilde{G}_{t}$ lives exactly two periods and, therefore, $\mathcal{E}$ is equivalent to the two-period model $\tilde{\mathcal{E}}$. Finally, I highlight an important convention that is adopted throughout the paper: all variables that stand with a $\sim$ on top (e.g., $\tilde{e}^{h}$, $\tilde{p}_{t}$ and $\tilde{x}^{h}(\tilde{p}_{t},\tilde{p}_{t+1})$) relate to the two-period ``relabeled'' model $\tilde{\mathcal{E}}$, whereas the ones without it relate to the original economy $\mathcal{E}$. 
\end{example}

Following the convention established in Example \ref{ex1}, let $\tilde{\mathcal{E}}$ be the two-period model equivalent to $\mathcal{E}$ through the \textcite[pp. 319-321]{BalaskoCassShell_1980} relabeling algorithm, with $\tilde{L}_{t}\geq1$ perishable goods in each period $t\geq0$. Each household $h\in \tilde{G}_{t}$, $t\geq0$, has a Walrasian demand function $\tilde{x}^{h}:\mathbb{R}^{\tilde{L}_{t}+\tilde{L}_{t+1}}_{++}\rightarrow \mathbb{R}^{\tilde{L}_{t}+\tilde{L}_{t+1}}_{+}$, $\tilde{x}^{h}(\tilde{p}_{t},\tilde{p}_{t+1})=(\tilde{x}^{h}_{t}(\tilde{p}_{t},\tilde{p}_{t+1}),\tilde{x}^{h}_{t+1}(\tilde{p}_{t},\tilde{p}_{t+1}))$, $(\tilde{p}_{t},\tilde{p}_{t+1})\in\mathbb{R}^{\tilde{L}_{t}+\tilde{L}_{t+1}}_{++}$, given by 
\begin{eqnarray*}
    \tilde{x}^{h}(\tilde{p}_{t},\tilde{p}_{t+1})=\textrm{argmax}_{\tilde{c}\in\mathbb{R}^{\tilde{L}_{t}+\tilde{L}_{t+1}}_{+}}& \tilde{u}^{h} (\tilde{c}) \\
    \textrm{ s.t. }& (\tilde{p}_{t},\tilde{p}_{t+1})\cdot (\tilde{c}-\tilde{e}^{h})\leq0.
\end{eqnarray*}
The excess demand function in period $t\geq1$, $\tilde{z}_{t}:\mathbb{R}^{\tilde{L}_{t-1}+\tilde{L}_{t}+\tilde{L}_{t+1}}_{++}\rightarrow \mathbb{R}^{\tilde{L}_{t}}$, is 
\begin{eqnarray*}
    \tilde{z}_{t}(\tilde{p}_{t-1},\tilde{p}_{t},\tilde{p}_{t+1})&=&\sum_{h\in \tilde{G}_{t-1}}\tilde{z}^{h}_{t}(\tilde{p}_{t-1},\tilde{p}_{t})+\sum_{h\in \tilde{G}_{t}}\tilde{z}^{h}_{t}(\tilde{p}_{t},\tilde{p}_{t+1})\\
    &=&\sum_{h\in \tilde{G}_{t-1}}(\tilde{x}^{h}_{t}(\tilde{p}_{t-1},\tilde{p}_{t})-\tilde{e}^{h}_{t})+\sum_{h\in \tilde{G}_{t}}(\tilde{x}^{h}_{t}(\tilde{p}_{t},\tilde{p}_{t+1})-\tilde{e}^{h}_{t}).
\end{eqnarray*}
Also, the \textit{real savings} $\tilde{s}^{h}:\mathbb{R}^{\tilde{L}_{t}+\tilde{L}_{t+1}}_{++}\rightarrow \mathbb{R}$ of household $h\in \tilde{G}_{t}$, $t\geq0$, is
    \begin{eqnarray*}
       \tilde{s}^{h}(\tilde{p}_{t},\tilde{p}_{t+1}) = \frac{\tilde{p}_{t}}{\Vert \tilde{p}_{t}\Vert}\cdot (\tilde{e}^{h}_{t}-\tilde{x}^{h}_{t}(\tilde{p}_{t},\tilde{p}_{t+1})).
    \end{eqnarray*}

All assumptions in \textcite{Dognini_2025} are also made here for the economy $\tilde{\mathcal{E}}$ and rewritten below for convenience. 

Therefore, I assume there are $0<\alpha_{\min}<\alpha_{\max}$ and $e_{\max}>0$ such that $\tilde{\alpha}_{t}\in[\alpha_{\min},\alpha_{\max}]$, $t\geq0$, and $\Vert \tilde{e}^{h}\Vert_{\infty}=\max_{1\leq i\leq  \tilde{L}_{t}+\tilde{L}_{t+1}}\vert \tilde{e}^{h}_{i}\vert\leq e_{\max}$, $h\in \mathbb{N}$, and, for $j\geq 0$, all households $h\in \bigcup^{j}_{t=0}\tilde{G}_{t}$ are indirectly resource related\footnote{Following \textcite[117-118]{ArrowHahn_1971}, given $j\geq0$, household $h^{\prime}\in\bigcup^{j}_{t=0}\tilde{G}_{t}$ is \textit{resource related} to household $h^{\prime\prime}\in\bigcup^{j}_{t=0}\tilde{G}_{t}$ if, for every allocation $\{c^{h}\}_{h\in\bigcup^{j}_{t=0}\tilde{G}_{t}}$, $c^{h}\in\mathbb{R}^{\tilde{L}_{t}+\tilde{L}_{t+1}}_{+}$, $h\in \tilde{G}_{t}$, $0\leq t\leq  j$, with $\sum_{0\leq t\leq  j}\sum_{h\in \tilde{G}_{t}}c^{h}\leq \sum_{0\leq t\leq  j}\sum_{h\in \tilde{G}_{t}}\tilde{e}^{h}$, there exists an allocation $\{\bar{c}^{\, h}\}_{h\in\bigcup^{j}_{t=0}\tilde{G}_{t}}$ and $y\in\mathbb{R}^{\sum^{j+1}_{i=0}\tilde{L}_{i}}_{+}$ such that $\sum_{0\leq t\leq  j}\sum_{h\in \tilde{G}_{t}}\bar{c}^{\, h}\leq\sum_{0\leq t\leq  j}\sum_{h\in \tilde{G}_{t}}\tilde{e}^{h} +y$,
\begin{eqnarray*}
        u^{h}(\bar{c}^{\, h})&\geq&u^{h}(c^{h}),\textrm{ for all }h\in\bigcup^{j}_{t=0}\tilde{G}_{t},\\
        u^{h^{\prime\prime}}(\bar{c}^{\, h^{\prime\prime}})&>&u^{h^{\prime\prime}}(c^{h^{\prime\prime}}),
\end{eqnarray*}
and
\begin{eqnarray*}
    y_{ti}>0\textrm{ only if } \tilde{e}^{h^{\prime}}_{ti}>0,
\end{eqnarray*}
for $i\in\{1,\ldots, \tilde{L}_{t}\}$, $0\leq t\leq j$. Household $h^{\prime}$ is \textit{indirectly resource related} to household $h^{\prime\prime}$ if there is a sequence of households $\{h_{m}\}_{0\leq m\leq n}$, $n\geq1$, with $h_{0}=h^{\prime}$, $h_{n}=h^{\prime\prime}$ and $h_{m}$ resource related to $h_{m+1}$, $0\leq m\leq n-1$. 
}. 

Before stating the next assumption, let $\beta=\max\, \{1+\alpha^{-1}_{\min},1+\alpha_{\max}\}$ and $\mathcal{B}_{t}(\sigma)=\{\tilde{p}\in\mathbb{R}^{\tilde{L}_{t}+\tilde{L}_{t+1}}_{++}\mid \sigma \leq \tilde{p}_{i}\leq \sigma^{-1}, 1\leq i \leq \tilde{L}_{t}+\tilde{L}_{t+1}\}$, for $\sigma\in(0,1)$, $t\geq 0$. Then, for all $t\geq0$, there is $\sigma_{t}\in(0,1)$ such that
\begin{eqnarray*}
    (\tilde{p}_{t}/\tilde{p}_{t1},\tilde{p}_{t+1}/\tilde{p}_{t1})\notin \mathcal{B}_{t}(\sigma_{t})\implies \biggr\Vert \sum_{h\in \tilde{G}_{t}} \frac{\tilde{x}^{h}(\tilde{p}_{t},\tilde{p}_{t+1})}{\tilde{H}_{t}}\biggr\Vert_{\infty} > \beta e_{\max},
\end{eqnarray*}
for $(\tilde{p}_{t},\tilde{p}_{t+1})\in \mathbb{R}^{\tilde{L}_{t}+\tilde{L}_{t+1}}_{++}$. The \textit{set of equilibria} $\tilde{\mathcal{H}}\subset\mathbb{R}^{\infty}_{++}$ of economy $\tilde{\mathcal{E}}$ is
\begin{eqnarray*}
\tilde{\mathcal{H}}=\{\tilde{p}\in\mathbb{R}^{\infty}_{++}\mid \tilde{p}_{01}=1,(\tilde{p}_{0},\tilde{p}_{1})\in\mathcal{B}_{0}(\sigma_{0})\textrm{ and } \tilde{z}_{t}(\tilde{p}_{t-1},\tilde{p}_{t},\tilde{p}_{t+1})=0, t\geq 1\},
\end{eqnarray*} 
and the \textit{subset of Pareto optimal equilibria} $\mathcal{H}^{PO}\subseteq\mathcal{H}$ is
\begin{eqnarray*}
    \tilde{\mathcal{H}}^{PO}=\{\tilde{p}\in\tilde{\mathcal{H}}\mid \nexists \{y^{h}\}_{h\geq1}, y^{h}\in\mathbb{R}^{\tilde{L}_{t}+\tilde{L}_{t+1}}_{+}, h\in \tilde{G}_{t},t\geq0, \textrm{ s.t. }  \sum_{t\geq0}\sum_{h\in \tilde{G}_{t}}(y^{h}-\tilde{x}^{h}(\tilde{p}_{t},\tilde{p}_{t+1}))=0\\
    \textrm{and }u^{h}(y^{h})\geq u^{h}(\tilde{x}^{h}(\tilde{p}_{t},\tilde{p}_{t+1})), h\in \tilde{G}_{t}, t\geq0, \textrm{with at least one strict inequality}\}.
\end{eqnarray*}
Assume that the \textcite{Cass_1972} criterion can be applied to $\tilde{\mathcal{E}}$ so that, for $\tilde{p}\in\tilde{\mathcal{H}}$, 
\begin{eqnarray*}
    \tilde{p}\in\tilde{\mathcal{H}}^{PO}\iff \sum^{+\infty}_{t=0}\frac{1}{H_{t}\Vert \tilde{p}_{t}\Vert}=+\infty.
\end{eqnarray*}
The last assumption is that the economy $\tilde{\mathcal{E}}$ is \textit{prone to savings}, meaning that there are $\varepsilon>0$, $\delta>0$ such that, for $t\geq0$, $(\tilde{p}_{t},\tilde{p}_{t+1})\in\mathbb{R}^{\tilde{L}_{t}+\tilde{L}_{t+1}}_{++}$ and $(\tilde{p}_{t}/\tilde{p}_{t1},\tilde{p}_{t+1}/\tilde{p}_{t1})\in\mathcal{B}_{t}(\sigma_{t})$,
\begin{eqnarray*}
    \sum_{h\in \tilde{G}_{t}} \frac{\tilde{s}^{h}(\tilde{p}_{t},\tilde{p}_{t+1})}{\tilde{H}_{t}}\leq \delta\implies \frac{\Vert \tilde{p}_{t} \Vert}{\Vert \tilde{p}_{t+1} \Vert} \leq \frac{\tilde{\alpha}_{t}}{1+\varepsilon}.
\end{eqnarray*}
Next, let $\{\tilde{\mathcal{E}}_{k}\}_{k\geq1}$ be a sequence of two-period overlapping generations economies that satisfy all assumptions above. Then, $\{\tilde{\mathcal{E}}_{k}\}_{k\geq1}$ is \textit{finitely replicating} $\tilde{\mathcal{E}}$ if: (i) there are common values of $\alpha_{\min},\alpha_{\max}$ and $e_{\max}>0$; (ii) there exists a compact set $\mathcal{K}^{\prime}\subset\mathbb{R}^{\infty}$ such that $\tilde{\mathcal{H}}\subseteq\mathcal{K}^{\prime}$ and $\tilde{\mathcal{H}}_{k}\subseteq\mathcal{K}^{\prime}$, $k\geq1$; and (iii) all generations of $\tilde{\mathcal{E}}_{k}$ born until period $k\geq1$ are identical to those of $\tilde{\mathcal{E}}$. In this case, the economy $\tilde{\mathcal{T}}_{k}$ formed by all generations of $\tilde{\mathcal{E}}_{k}$ born after period $k\geq1$ is called a \textit{tail economy}. Finally, as a topological space, $\mathbb{R}^{\infty}$ is endowed with the product topology.

With all definitions and assumptions made, I can now state the theorem from \textcite{Dognini_2025} that we use to design optimally balanced social security systems.

\begin{theorem}[Backward calculation algorithm]\label{theoAlgorithmFinal}
If $\{\tilde{\mathcal{E}}_{k}\}_{k\geq1}$ is finitely replicating $\tilde{\mathcal{E}}$, then 
\begin{eqnarray*}
\lim_{j\rightarrow\infty}\overline{\bigcup_{k\geq j}\tilde{\mathcal{H}}^{PO}_{k}}\subseteq \tilde{\mathcal{H}}^{PO},
\end{eqnarray*}
with $\lim_{j\rightarrow\infty}\overline{\bigcup_{k\geq j}\tilde{\mathcal{H}}^{PO}_{k}}$ a non-empty compact set.
\end{theorem}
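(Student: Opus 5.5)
The plan has three parts: compactness, closedness of the equilibrium set under product-topology limits, and transfer of Pareto optimality via the Cass criterion.

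\emph{Compactness and non-emptiness.} By condition (ii) of finite replication, every $\tilde{\mathcal{H}}_{k}$ lies in the common compact set $\mathcal{K}^{\prime}\subset\mathbb{R}^{\infty}$. Hence each $C_{j}=\overline{\bigcup_{k\geq j}\tilde{\mathcal{H}}^{PO}_{k}}$ is a closed subset of $\mathcal{K}^{\prime}$, so it is compact. The sets $C_{j}$ are decreasing in $j$, and the limit is $\bigcap_{j}C_{j}$. To get non-emptiness I need $\tilde{\mathcal{H}}^{PO}_{k}\neq\emptyset$ for every $k$; I take this from the existence result in \textcite{Dognini_2025} for prone-to-savings economies satisfying the stated assumptions. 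A decreasing sequence of non-empty compact sets then has non-empty compact intersection (Cantor).

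\emph{Limits are equilibria.} Take $\bar p\in\bigcap_j C_j$. Then there are $k_n\to\infty$ and $p^{n}\in\tilde{\mathcal{H}}^{PO}_{k_n}$ with $p^{n}\to\bar p$ coordinatewise. Fix $t\geq1$. For $k_n\geq t+1$, condition (iii) says the generations involved in $\tilde z_t$ coincide with those of $\tilde{\mathcal{E}}$, so $\tilde z_t(p^n_{t-1},p^n_t,p^n_{t+1})=0$ uses the excess demands of $\tilde{\mathcal{E}}$. Equilibrium prices stay in the boxes $\mathcal{B}_t(\sigma_t)$ after normalization, because otherwise average demand would exceed $\beta e_{\max}$ and aggregate feasibility would fail. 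Together with $\tilde p_{01}=1$, this keeps prices bounded away from $0$ and $\infty$ along the sequence. Continuity of Walrasian demand on $\mathbb{R}_{++}$ (Berge's maximum theorem plus single-valuedness, or closed-graph arguments with semi-strict quasiconcavity) then passes $\tilde z_t=0$ to the limit, and $(\bar p_0,\bar p_1)\in\mathcal{B}_0(\sigma_0)$ by closedness. So $\bar p\in\tilde{\mathcal{H}}$.

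\emph{Limits are Pareto optimal (main obstacle).} Optimality is not preserved under product-topology limits, since the Cass sum $\sum_t 1/(H_t\Vert\tilde p_t\Vert)$ depends on the whole tail. I would argue by contradiction. Suppose $\bar p\notin\tilde{\mathcal{H}}^{PO}$, so the Cass sum is finite. First I would show this forces the normalized present value $\Vert\bar p_t\Vert H_t$ to grow geometrically, so that the economy is eventually in the regime $\Vert\bar p_t\Vert/\Vert\bar p_{t+1}\Vert\geq\tilde\alpha_t(1+\varepsilon')$ along a subsequence. Then I would use the prone-to-savings assumption contrapositively: whenever $\Vert \tilde p_t\Vert/\Vert\tilde p_{t+1}\Vert>\tilde\alpha_t/(1+\varepsilon)$, average real savings exceed $\delta$. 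Market clearing makes real savings equal to the real value of money held, which is $\propto 1/(H_t\Vert\tilde p_t\Vert)$. This gives a uniform lower bound on $1/(H_t\Vert\tilde p_t\Vert)$ for optimal equilibria, and that bound contradicts summability. Concretely, I expect a dichotomy: either real money balances per capita stay above a uniform $c(\delta)>0$, which makes the Cass sum diverge, or they go to zero geometrically. The Pareto optimal $p^{n}$ must be in the first case on an initial segment whose length grows with $n$, and the lower bound then passes to $\bar p$ on every finite segment.

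The delicate point is making this bound uniform in $n$ using only the common constants $\alpha_{\min},\alpha_{\max},e_{\max},\varepsilon,\delta$ from condition (i). I would try to prove it by a backward induction on $t$ from the tail: optimality of $p^n$ for $\tilde{\mathcal{E}}_{k_n}$ implies nondecaying savings in the tail economy $\tilde{\mathcal T}_{k_n}$, and the equilibrium equations propagate this lower bound back to periods $t\leq k_n$ with constants independent of $n$.
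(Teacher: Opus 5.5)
The paper does not prove this theorem; it imports it from \textcite{Dognini_2025}. So I assess your plan on its own terms.

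Your first two steps are sound. Non-emptiness of each $\tilde{\mathcal{H}}^{PO}_{k}$ has to be imported anyway. The box argument does keep limits in $\tilde{\mathcal{H}}$, since by (i) and (iii) the same $\sigma_{t}$ serves every $\tilde{\mathcal{E}}_{k}$ with $k\geq t$.

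The gap is the third step, which carries all the content. You use constants $\varepsilon,\delta$ ``from condition (i)'', but (i) as restated here fixes only $\alpha_{\min},\alpha_{\max},e_{\max}$. Each $\tilde{\mathcal{E}}_{k}$ is prone to savings with its own $(\varepsilon_{k},\delta_{k})$. No induction from the tail can extract a uniform bound from (i)--(iii), because under (i)--(iii) alone the inclusion can fail. Here is an example.
\begin{itemize}
\item Take one good per period, constant population, and $u=\sqrt{c_{1}c_{2}}$ for every household.
\item In $\tilde{\mathcal{E}}$ all endowments are $(2,1)$. Per-capita real savings then satisfy $S_{t+1}=g(S_{t})$ with $g(S)=S/(2-2S)$. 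For any $c\in(0,1/2)$, the path $S_{t}=g^{t}(c)\to0$, with prices $p_{t}=c/S_{t}$, is an equilibrium $\bar{p}\notin\tilde{\mathcal{H}}^{PO}$.
\item Let $\tilde{\mathcal{E}}_{k}$ agree with $\tilde{\mathcal{E}}$ for generations born up to period $k$. Give later generations the endowment $(1+\eta_{k},1)$, where $\eta_{k}=2g^{k+1}(c)$.
\item Then $S_{t}=g^{t}(c)$ for $t\leq k+1$, continued by the tail's golden-rule savings $S_{t}=\eta_{k}/2$, is a Pareto optimal equilibrium $p^{k}$ of $\tilde{\mathcal{E}}_{k}$. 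Moreover $p^{k}\to\bar{p}$.
\item Conditions (i)--(iii) hold, since all equilibrium prices lie in a fixed product of compact intervals. But the tail's golden-rule savings are $\eta_{k}/2$, so $\tilde{\mathcal{E}}_{k}$ is prone to savings only with $\delta_{k}<\eta_{k}/2\to0$.
\end{itemize}
So a uniform $\delta$ must be an explicit hypothesis. It is satisfied by fixed-parameter tails such as those in Examples~\ref{ex2} and~\ref{ex3}.

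Even granting a common $\delta$, your sketch does not produce the bound, and part of it points the wrong way.
\begin{itemize}
\item Geometric growth of $\tilde H_{t}\Vert\bar{p}_{t}\Vert$ means $\Vert\bar{p}_{t}\Vert/\Vert\bar{p}_{t+1}\Vert\leq\tilde\alpha_{t}/(1+\varepsilon')$, not $\geq\tilde\alpha_{t}(1+\varepsilon')$.
\item The contrapositive ``high return implies savings above $\delta$'' says nothing about optimal equilibria, because you do not control their returns.
\end{itemize}
The missing step is forward propagation. Let $S_{t}$ be the per-capita real savings of $\tilde G_{t}$. Walras' law and market clearing make nominal savings $M$ the same for every generation, so $S_{t}=M/(\tilde H_{t}\Vert\tilde{p}_{t}\Vert)$. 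Suppose $S_{t_{0}}\leq\delta$. Prone to savings (equilibrium prices lie in the boxes, as you argued) gives $\tilde H_{t_{0}+1}\Vert\tilde{p}_{t_{0}+1}\Vert\geq(1+\varepsilon)\tilde H_{t_{0}}\Vert\tilde{p}_{t_{0}}\Vert$, hence $S_{t_{0}+1}\leq\delta$ again. By induction $\tilde H_{t}\Vert\tilde{p}_{t}\Vert$ grows geometrically and the Cass sum converges.

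Thus, in any prone-to-savings economy, an equilibrium is Pareto optimal if and only if $S_{t}>\delta$ for \emph{every} $t$, not just on an initial segment. No backward induction is needed.

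Now apply this to $p^{n}\in\tilde{\mathcal{H}}^{PO}_{k_{n}}$ to get $S_{t}(p^{n})>\delta$ for all $t$ and $n$. Generation $t$ is fixed once $k_{n}\geq t$, so continuity gives $S_{t}(\bar{p})\geq\delta$ for all $t$. Hence $\bar M\geq\tilde H_{0}\Vert\bar{p}_{0}\Vert\delta>0$ and $\sum_{t}1/(\tilde H_{t}\Vert\bar{p}_{t}\Vert)=\sum_{t}S_{t}(\bar{p})/\bar M=+\infty$. The Cass criterion then puts $\bar{p}$ in $\tilde{\mathcal{H}}^{PO}$.
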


Theorem \ref{theoAlgorithmFinal} states that there is at least one Pareto optimal equilibrium of $\tilde{\mathcal{H}}^{PO}$ that can be arbitrarily approximated (in the product topology) by elements of $\bigcup_{k\geq j}\tilde{\mathcal{H}}^{PO}_{k}$, $j\geq1$. The key point of this result is the finitely replicating sequence $\{\tilde{\mathcal{E}}_{k}\}_{k\geq1}$ and, therefore, the sequence of tail economies $\{\tilde{\mathcal{T}}_{k}\}_{k\geq1}$, since the latter completely defines the former. 

In order to find elements of $\tilde{\mathcal{H}}^{PO}_{k}$ (and, ultimately, of $\tilde{\mathcal{H}}^{PO}$) we must choose well-behaved tail-economies $\tilde{\mathcal{T}}_{k}$, $k\geq1$, in the sense that we know precisely their set of Pareto optimal equilibria, so that we can solve equilibrium equations backward, step-by-step, as in Proposition \ref{propOptReturnRates1Dim} (after some careful thought, one can notice that the set of Pareto optimal equilibria of the tail economies fulfills the role of the ``boundary condition'' $L>0$ in Proposition \ref{propOptReturnRates1Dim}).

There are two natural options for building these well-behaved tail economies: minimum-lifespan and full-lifespan tail economies. The following example illustrates the minimum-lifespan case.

\begin{example}[Minimum-lifespan tail economies]\label{ex2}
I build on Example \ref{ex1}. Our goal is to build a well-behaved replicating sequence $\{\tilde{\mathcal{E}}_{k}\}_{k\geq1}$, which is equivalent to properly choosing tail economies $\{\tilde{\mathcal{T}}_{k}\}_{k\geq1}$. Let $\tilde{L}^{\tau k}_{t}\geq1$ (the symbol $\tau$ stands for ``tail'') be the number of perishable goods in period $t\geq0$ of $\tilde{\mathcal{T}}_{k}$, $k\geq1$. Since $S=3$, we have $\tilde{L}_{t}=2$, $t\geq0$ (i.e., economy $\tilde{\mathcal{E}}$ has two perishable goods in every period). Since all generations of $\tilde{\mathcal{E}}_{k}$ coincide with those of $\tilde{\mathcal{E}}$ until period $k\geq1$, we have $\tilde{L}^{\tau k}_{0}=\tilde{L}_{k+1}=2$. 
    
Next, we must choose $\tilde{L}^{\tau k}_{t}$, $t\geq1$. Since we are dealing with the minimum-lifespan case, let $\tilde{L}^{\tau k}_{t}=1$, $t\geq1$.

Then, we can take, for instance, generation $\tilde{G}^{\tau k}_{0}$ composed of $\tilde{\mathcal{H}}_{k}$ identical households (so that the population remains constant from period $k$ to period $k+1$ in $\tilde{\mathcal{E}}_{k}$) with a common endowment $\tilde{e}^{\tau k0}=(\tilde{e}^{\tau k0}_{01},\tilde{e}^{\tau k0}_{02},\tilde{e}^{\tau k0}_{1})=(e_{y},e_{y},e_{o})\in\mathbb{R}^{3}_{++}$, $0<e_{o}<e_{y}$ (the indexes stand for ``young'' and ``old''), and a common utility function $\tilde{u}^{\tau k 0}:\mathbb{R}^{3}_{++}\rightarrow\mathbb{R}$ given by
\begin{eqnarray*}
    \tilde{u}^{\tau k0}(\tilde{c}_{0},\tilde{c}_{1})=\tilde{u}^{\tau k0}(\tilde{c}_{01},\tilde{c}_{02},\tilde{c}_{1})=\theta \log \tilde{c}_{01}+\theta \log \tilde{c}_{02}+(1-2\theta)\log \tilde{c}_{1}.
\end{eqnarray*}
for $\theta\in(0,1/2)$. Furthermore, generation $\tilde{G}^{\tau k}_{t}$, $t\geq1$, is composed of $\tilde{\mathcal{H}}_{k}$ households (so that the population is constant in the tail economy $\tilde{\mathcal{T}}_{k}$), with a common endowment $\tilde{e}^{\tau kt}=(e_{y},e_{o})\in\mathbb{R}^{2}_{++}$, and a common utility function $\tilde{u}^{\tau k t}:\mathbb{R}^{2}_{+}\rightarrow\mathbb{R}$, given by $\tilde{u}^{\tau k t}(\tilde{c}_{t},\tilde{c}_{t+1})=\log \tilde{c}_{t}+\log \tilde{c}_{t+1}$. After some careful thought on these tail economies (see the analysis in \textcite{Gale_1973}), we conclude that
\begin{eqnarray}\label{eqParetoOptEqSetEx2}
    \tilde{\mathcal{H}}^{PO}_{\tau k}=\{\tilde{p}\in\mathbb{R}^{\infty}_{++}\mid \tilde{p}_{01}=1, (\tilde{p}_{0},\tilde{p}_{1})\in \mathcal{B}(\sigma_{0}),(1-2\theta)(\tilde{p}_{0},\tilde{p}_{1})\cdot \tilde{e}^{\tau k0}/\tilde{p}_{1}=(e_{y}+e_{o})/2,\nonumber\\
    \tilde{p}_{t}=\tilde{p}_{t+1}, t\geq1\},
\end{eqnarray}
with
\begin{eqnarray*}
    \frac{(1-2\theta)(\tilde{p}_{0},\tilde{p}_{1})\cdot \tilde{e}^{\tau k0}}{\tilde{p}_{1}}=\frac{e_{y}+e_{o}}{2}\iff \tilde{p}_{1}=\frac{2(1-2\theta)\Vert \tilde{p}_{0}\Vert e_{y}}{e_{y}+(4\theta-1)e_{o}}=f(\tilde{p}_{0}).
\end{eqnarray*}
Once the tail economy $\tilde{\mathcal{T}}_{k}$, $k\geq1$, is defined, to calculate $\mathcal{H}^{PO}_{k}$ we must solve
\begin{eqnarray*}
    \sum_{h\in G_{0}}z^{h}_{2}(p_{0},p_{1},p_{2})+\sum_{h\in G_{1}}z^{h}_{2}(p_{1},p_{2},p_{3})+\sum_{h\in G_{2}}z^{h}_{2}(p_{2},p_{3},p_{4})&=&0\\
    &\vdots&\\
    \sum_{h\in G_{2k}}z^{h}_{2k+2}(p_{2k},\ldots)+\sum_{h\in G_{2k+1}}z^{h}_{2k+2}(p_{2k+1},\ldots)+\sum_{h\in \tilde{G}^{\tau k}_{0}}\tilde{z}^{h}_{01}(p_{2k+2},p_{2k+3},p_{2k+4})&=&0\\
    \sum_{h\in G_{2k+1}}z^{h}_{2k+3}(p_{2k+1},p_{2k+2},p_{2k+3})+\sum_{h\in \tilde{G}^{\tau k}_{0}}\tilde{z}^{h}_{02}(p_{2k+2},p_{2k+3},p_{2k+4})&=&0,
\end{eqnarray*}
with $f(p_{2k+2},p_{2k+3})=p_{2k+4}$, so that $(p_{2k+2}/p_{2k+2},p_{2k+3}/p_{2k+2},p_{2k+4}/p_{2k+2})\in\tilde{\mathcal{H}}^{PO}_{\tau k}$. Notice, first, that this \textit{finite} system is analogous to the finite system in Section \ref{sec2} that leaded to Proposition \ref{propOptReturnRates1Dim}. Here, again, our goal is to solve backward the $2k+2$ equations, but we are no longer ``departing'' from $L>0$ as our ``boundary condition''. This time, our boundary is formed by prices $(p_{2k+2},p_{2k+3},p_{2k+4})\in\mathbb{R}^{3}_{++}$ that satisfy $f(p_{2k+2},p_{2k+3})=p_{2k+4}$. 

To ease the calculation, we can normalize $p_{2k+4}=2(1-2\theta)e_{y}/(e_{y}+(4\theta-1)e_{o})=\beta$, so that $f(p_{2k+2},p_{2k+3})=p_{2k+4}$ implies $\Vert(p_{2k+2},p_{2k+3})\Vert=p_{2k+2}+p_{2k+3}=1$. Let $p_{2k+2}=\sigma\in(0,1)$, so that the previous system can now be written as
\begin{eqnarray*}
    \sum_{h\in G_{0}}z^{h}_{2}(p_{0},p_{1},p_{2})+\sum_{h\in G_{1}}z^{h}_{2}(p_{1},p_{2},p_{3})+\sum_{h\in G_{2}}z^{h}_{2}(p_{2},p_{3},p_{4})&=&0\\
    &\vdots&\\
    \sum_{h\in G_{2k}}z^{h}_{2k+2}(p_{2k},p_{2k+1},\sigma)+\sum_{h\in G_{2k+1}}z^{h}_{2k+2}(p_{2k+1},\sigma, 1-\sigma)+\sum_{h\in \tilde{G}^{\tau k}_{0}}\tilde{z}^{h}_{01}(\sigma,1-\sigma,\beta)&=&0\\
    \sum_{h\in G_{2k+1}}z^{h}_{2k+3}(p_{2k+1},\sigma,1-\sigma)+\sum_{h\in \tilde{G}^{\tau k}_{0}}\tilde{z}^{h}_{02}(\sigma,1-\sigma,\beta)&=&0.
\end{eqnarray*}

Since $\sigma\in(0,1)$ is fixed, we have $2k+2$ equations and $2k+2$ unknown prices $(p_{0},\ldots,p_{2k+1})$. We can use the last equation to calculate $p_{2k+1}>0$; then, the previous to last to calculate $p_{2k}>0$; and we keep moving backward until we reach the first equation and calculate $p_{0}>0$. This algorithm will not necessarily reach its end (i.e., calculate $p_{0}>0$) for all $\sigma\in(0,1)$, but it will for at least one such value (see Theorem 8 in \textcite{Dognini_2025}). 

Furthermore, Theorem \ref{theoAlgorithmFinal} tells us that some of these solutions correctly approach a Pareto optimal monetary equilibrium of our fictitious economy $\mathcal{E}$, which is what we are looking for in order to design our optimally balanced social security system. This point will be further discussed in Section \ref{sec4}.
\end{example}

Next, we deal with the full-lifespan case.

\begin{example}[Full-lifespan tail economies]\label{ex3}
Once again, I build on Example \ref{ex1}. Our goal is to choose well-behaved tail economies $\{\tilde{\mathcal{T}}_{k}\}_{k\geq1}$. Since $S=3$, we have $\tilde{L}_{t}=2$, $t\geq0$, and $\tilde{L}^{\tau k}_{0}=\tilde{L}_{k+1}=2$, $k\geq1$. Next, we must choose $\tilde{L}^{\tau k}_{t}$, $t\geq1$. Since we are dealing with the full-lifespan case, let our tail economies be three-period overlapping generations ones, so that $\tilde{L}^{\tau k}_{t}=2$, $t\geq1$.

Also, let generation $G^{\tau k}_{0}$ be composed of $H^{\tau k}_{0}=H_{2(k+1)}$ households and $\mathcal{T}^{k}$ have a constant population growth rate $\alpha\geq1$, i.e., $H^{\tau k}_{t+1}/H^{\tau k}_{t}=\alpha$, $t\geq0$. Notice that $\tilde{G}^{\tau k}_{t}=G^{\tau k}_{2t}\cup G^{\tau k}_{2t+1}$, $t\geq0$, and, reasoning as in (\ref{eqGrowthRateBCS}), we conclude that $\tilde{\alpha}^{\tau k}_{t}=\alpha^{2}$, $t\geq0$.

Furthermore, all households hold a common endowment given by $e^{h}=(e,e,e)\in\mathbb{R}^{3}_{++}$, $h\geq1$, so that
    \begin{eqnarray*}
        \tilde{e}^{h}=(\tilde{e}^{h}_{t},\tilde{e}^{h}_{t+1})=\begin{cases}
            (e,e,e,0),\textrm{ if } h\in G^{\tau k}_{2t}\\
            (0,e,e,e),\textrm{ if } h\in G^{\tau k}_{2t+1}
        \end{cases}
    \end{eqnarray*}
for $h\in \tilde{G}^{\tau k}_{t}$, $t\geq0$. Households also have a common utility function $u^{h}:\mathbb{R}^{3}_{+}\rightarrow\mathbb{R}$,
\begin{eqnarray*}
    u^{h}(c_{t},c_{t+1},c_{t+2})=\log c_{t}+\theta\log c_{t+1}+\theta^{2}\log c_{t+2},
\end{eqnarray*}
for $h\in G^{\tau k}_{t}$, $t\geq0$, and $\theta\geq1$, so that
    \begin{eqnarray*}
        \tilde{u}^{h}(\tilde{c}_{t},\tilde{c}_{t+1})=\begin{cases}
             \log \tilde{c}_{t1} +  \theta\log \tilde{c}_{t2} + \theta^{2}\log \tilde{c}_{(t+1)1},\textrm{ if } h\in G^{\tau k}_{2t}\\
             \log \tilde{c}_{t2} + \theta\log \tilde{c}_{(t+1)1} + \theta^{2}\log \tilde{c}_{(t+1)2},\textrm{ if } h\in G^{\tau k}_{2t+1}
        \end{cases}
    \end{eqnarray*}
for $h\in \tilde{G}^{\tau k}_{t}$, $t\geq0$. Then, the Walrasian demand $\tilde{x}^{h}:\mathbb{R}^{4}_{++}\rightarrow\mathbb{R}^{4}_{+}$, $h\geq1$, is given by 
    \begin{eqnarray*}
        \tilde{x}^{h}(\tilde{p}_{t},\tilde{p}_{t+1})=\begin{dcases}
            \frac{(\tilde{p}_{t},\tilde{p}_{t+1})\cdot \tilde{e}^{h}}{1+\theta+\theta^{2}}\biggr(\frac{1}{\tilde{p}_{t1}},\frac{\theta}{\tilde{p}_{t2}},\frac{\theta^{2}}{\tilde{p}_{(t+1)1}},0\biggr),\textrm{ if } h\in G^{\tau k}_{2t}\\
           \frac{(\tilde{p}_{t},\tilde{p}_{t+1})\cdot \tilde{e}^{h}}{1+\theta+\theta^{2}}\biggr(0,\frac{1}{\tilde{p}_{t2}},\frac{\theta}{\tilde{p}_{(t+1)1}},\frac{\theta^{2}}{\tilde{p}_{(t+1)2}}\biggr),\textrm{ if } h\in G^{\tau k}_{2t+1}
        \end{dcases}
    \end{eqnarray*}
for $h\in\tilde{G}^{\tau k}_{t}$, $t\geq0$. Also, real savings per capita is given by
    \begin{eqnarray}\label{eqRealSavingsTail}
        \sum_{h\in\tilde{G}^{\tau k}_{t}}\frac{\tilde{s}^{h}(\tilde{p}_{t},\tilde{p}_{t+1})}{\tilde{H}^{\tau k}_{t}}&=&\sum_{h\in G^{\tau k}_{2t}}\frac{\tilde{p}_{t}\cdot(\tilde{e}^{h}_{t}-\tilde{x}^{h}_{t}(\tilde{p}_{t},\tilde{p}_{(t+1)}))}{\Vert \tilde{p}_{t}\Vert (H^{\tau k}_{2t}+H^{\tau k}_{2t+1})}+\sum_{h\in G^{\tau k}_{2t+1}}\frac{\tilde{p}_{t}\cdot(\tilde{e}^{h}_{t}-\tilde{x}^{h}_{t}(\tilde{p}_{t},\tilde{p}_{(t+1)}))}{\Vert \tilde{p}_{t}\Vert(H^{\tau k}_{2t}+H^{\tau k}_{2t+1})}\nonumber\\
        &=&\frac{\theta^{2}-((1+\theta)\tilde{p}_{(t+1)1}-\alpha(\theta+\theta^{2})\tilde{p}_{t2})/\Vert \tilde{p}_{t}\Vert-\alpha \Vert \tilde{p}_{t+1}\Vert/\Vert \tilde{p}_{t}\Vert}{(1+\theta+\theta^{2})(1+\alpha)}e,
    \end{eqnarray}
for $t\geq0$. I proceed with the following two lemmas.

\begin{lemma}\label{lemmaTailProneToSavings}
    Let $\tilde{\mathcal{T}}_{k}$, $k\geq1$, be a tail economy described above. If $1+\theta+\alpha<\theta^{2}\alpha^{2}$, then  $\tilde{\mathcal{T}}_{k}$ is prone to savings.
\end{lemma}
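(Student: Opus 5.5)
The plan is to verify the definition of prone to savings directly from the closed form (\ref{eqRealSavingsTail}), using only crude bounds on the price ratios. I will take $\Vert\cdot\Vert$ to be the $\ell^{1}$ norm on positive price vectors, as in Example \ref{ex2}, where $\Vert(p_{2k+2},p_{2k+3})\Vert=p_{2k+2}+p_{2k+3}$. Fix $t\geq0$ and $(\tilde{p}_{t},\tilde{p}_{t+1})\in\mathbb{R}^{4}_{++}$, and introduce the normalized quantities
\begin{eqnarray*}
a=\frac{\tilde{p}_{(t+1)1}}{\Vert\tilde{p}_{t}\Vert},\qquad b=\frac{\tilde{p}_{t2}}{\Vert\tilde{p}_{t}\Vert},\qquad q=\frac{\Vert\tilde{p}_{t+1}\Vert}{\Vert\tilde{p}_{t}\Vert}.
\end{eqnarray*}
With $C=(1+\theta+\theta^{2})(1+\alpha)/e>0$, formula (\ref{eqRealSavingsTail}) reads
\begin{eqnarray*}
C\sum_{h\in\tilde{G}^{\tau k}_{t}}\frac{\tilde{s}^{h}(\tilde{p}_{t},\tilde{p}_{t+1})}{\tilde{H}^{\tau k}_{t}}=\theta^{2}-(1+\theta)a+\alpha(\theta+\theta^{2})b-\alpha q .
\end{eqnarray*}
The target is $\Vert\tilde{p}_{t}\Vert/\Vert\tilde{p}_{t+1}\Vert=1/q\leq\tilde{\alpha}^{\tau k}_{t}/(1+\varepsilon)=\alpha^{2}/(1+\varepsilon)$, with $\varepsilon,\delta$ independent of $t$ and $k$.

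The key step is a lower bound on the right-hand side in terms of $q$ alone. Since $b>0$, the term $\alpha(\theta+\theta^{2})b$ can be dropped. Since $a$ is one nonnegative component of $\tilde{p}_{t+1}$ divided by $\Vert\tilde{p}_{t}\Vert$, we have $a\leq q$ under the $\ell^{1}$ norm. Hence
\begin{eqnarray*}
C\sum_{h\in\tilde{G}^{\tau k}_{t}}\frac{\tilde{s}^{h}}{\tilde{H}^{\tau k}_{t}}\geq\theta^{2}-(1+\theta+\alpha)q .
\end{eqnarray*}
So if per capita real savings is at most $\delta$, then $q\geq(\theta^{2}-C\delta)/(1+\theta+\alpha)$.

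It remains to make $q\geq(1+\varepsilon)/\alpha^{2}$. The hypothesis $1+\theta+\alpha<\theta^{2}\alpha^{2}$ is exactly $\theta^{2}/(1+\theta+\alpha)>1/\alpha^{2}$. Choose $\varepsilon>0$ with $(1+\varepsilon)(1+\theta+\alpha)<\theta^{2}\alpha^{2}$, and then $\delta>0$ with
\begin{eqnarray*}
\theta^{2}-C\delta\geq(1+\varepsilon)(1+\theta+\alpha)/\alpha^{2}.
\end{eqnarray*}
Both constants depend only on $\theta,\alpha,e$, so they are uniform in $t$ and $k$, as the definition requires. The restriction of prices to $\mathcal{B}_{t}(\sigma_{t})$ is not even needed.

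I expect the only delicate point to be the bookkeeping. First, I must confirm the sign of the $b$-term in (\ref{eqRealSavingsTail}); if it enters negatively, I would instead use $b\leq1$, which yields a slightly different constant. Second, I must check that the norm convention really gives $a\leq q$. If the norm were not $\ell^{1}$, I would fall back on the equivalence of norms on $\mathbb{R}^{2}$ together with the bounds from $\mathcal{B}_{t}(\sigma_{t})$, at the cost of a less clean constant.
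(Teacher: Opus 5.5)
Your proposal is correct and is essentially the paper's proof: the paper likewise drops the nonnegative $\alpha(\theta+\theta^{2})\tilde{p}_{t2}$ term and bounds $\tilde{p}_{(t+1)1}\leq\Vert\tilde{p}_{t+1}\Vert$, which gives per capita real savings of at least $\frac{e}{(1+\theta+\theta^{2})(1+\alpha)}\bigl(\theta^{2}-(1+\theta+\alpha)\Vert\tilde{p}_{t+1}\Vert/\Vert\tilde{p}_{t}\Vert\bigr)$; it then chooses $\varepsilon,\delta$ from $\theta^{2}/(1+\theta+\alpha)>\alpha^{-2}$, uniformly in $t$ and without using $\mathcal{B}_{t}(\sigma_{t})$. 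Both of your bookkeeping worries resolve favorably: the $\tilde{p}_{t2}$ term enters (\ref{eqRealSavingsTail}) with a plus sign, and that formula is itself derived with the $\ell^{1}$ norm (it uses $\tilde{p}_{t1}+\tilde{p}_{t2}=\Vert\tilde{p}_{t}\Vert$), so $a\leq q$ holds.
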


\begin{proposition}\label{propTailSetofEquilibria}
    Let $\tilde{\mathcal{T}}_{k}$, $k\geq1$, be a tail economy described above, with $\min\{\alpha,\theta\}>1$, and $\tilde{\mathcal{H}}_{\tau k}$ its set of equilibria. Also, let 
    \begin{eqnarray*}
        \lambda_{3}=\frac{-(1+\alpha)(1+\theta)+\sqrt{(1+\alpha)^{2}(1+\theta)^{2}-4\theta\alpha}}{2\alpha}.
    \end{eqnarray*}
    Then, $\tilde{p}\in \tilde{\mathcal{H}}_{\tau k}$ if, and only if, $(\tilde{p}_{0},\tilde{p}_{1})\in\mathcal{B}_{0}(\sigma_{0})$ and there are $a_{1}\geq0$, $a_{2}, a_{3}\in\mathbb{R}$, such that $a_{1}+a_{2}+a_{3}=1$ and the sequence given by
    \begin{eqnarray*}
        \begin{bmatrix}
            p_{t} \\ p_{t+1} \\ p_{t+2} \\ p_{t+3}
        \end{bmatrix} = 
        a_{1} \theta^{t} \begin{bmatrix}
            1 \\ \theta \\ \theta^{2} \\ \theta^{3}
        \end{bmatrix}+
         a_{2} \alpha^{-t} \begin{bmatrix}
            1 \\ 1/\alpha \\ 1/\alpha^{2} \\ 1/\alpha^{3}
        \end{bmatrix}+
         a_{3} \lambda^{t}_{3} \begin{bmatrix}
            1 \\ \lambda_{3} \\ \lambda_{3}^{2} \\ \lambda^{3}_{3}
        \end{bmatrix},
    \end{eqnarray*}
for $t\geq0$, is strictly positive.
\end{proposition}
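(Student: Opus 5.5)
The plan is to drop the relabeling and work with the original three-period tail economy $\mathcal{T}_{k}$ and its price sequence $p=(p_{0},p_{1},\ldots)$. Here $\tilde{p}_{t}=(p_{2t},p_{2t+1})$ and $\tilde{p}_{01}=p_{0}=1$.

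\textbf{Step 1: reduce equilibrium to a linear recurrence.} With log utility and common endowment $(e,e,e)$, a household born in $t$ has wealth $W_{t}=e(p_{t}+p_{t+1}+p_{t+2})$. Writing $D=1+\theta+\theta^{2}$, its demands are $W_{t}/(Dp_{t})$, $\theta W_{t}/(Dp_{t+1})$ and $\theta^{2}W_{t}/(Dp_{t+2})$. The equations $\tilde{z}_{t}=0$, $t\geq1$, are exactly market clearing in original periods $s\geq2$, and each such period contains three full generations $G^{\tau k}_{s-2},G^{\tau k}_{s-1},G^{\tau k}_{s}$. Using $H^{\tau k}_{s-j}=\alpha^{-j}H^{\tau k}_{s}$ and multiplying by $p_{s}$, clearing in period $s$ becomes
\begin{eqnarray*}
W_{s}+\frac{\theta}{\alpha}W_{s-1}+\frac{\theta^{2}}{\alpha^{2}}W_{s-2}=D\Bigl(1+\frac{1}{\alpha}+\frac{1}{\alpha^{2}}\Bigr)e\,p_{s},\qquad s\geq2.
\end{eqnarray*}
This is a linear homogeneous recurrence of order four in $(p_{s-2},\ldots,p_{s+2})$ with constant coefficients. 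Its solution space is therefore four-dimensional, parametrized by $(p_{0},p_{1},p_{2},p_{3})$. Hence $\tilde{p}\in\tilde{\mathcal{H}}_{\tau k}$ if and only if $p$ solves this recurrence, is strictly positive, has $p_{0}=1$, and satisfies $(\tilde{p}_{0},\tilde{p}_{1})\in\mathcal{B}_{0}(\sigma_{0})$.

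\textbf{Step 2: factor the characteristic polynomial.} Substituting $p_{s}=\lambda^{s}$ gives a quartic (after multiplying by $\alpha^{2}$). I will verify directly that it factors, up to a nonzero constant, as
\begin{eqnarray*}
(\lambda-\theta)(\alpha\lambda-1)\bigl(\alpha\lambda^{2}+(1+\alpha)(1+\theta)\lambda+\theta\bigr).
\end{eqnarray*}
The quadratic factor has two negative roots, $\lambda_{3}$ (the one in the statement) and
\begin{eqnarray*}
\lambda_{4}=\frac{-(1+\alpha)(1+\theta)-\sqrt{(1+\alpha)^{2}(1+\theta)^{2}-4\theta\alpha}}{2\alpha}.
\end{eqnarray*}
Since $\theta>1>1/\alpha$ and the roots are distinct, the general solution is $p_{s}=a_{1}\theta^{s}+a_{2}\alpha^{-s}+a_{3}\lambda_{3}^{s}+a_{4}\lambda_{4}^{s}$, and $p_{0}=1$ becomes the normalization on the coefficients. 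This expansion checking is routine but it is the computational core, so I expect it to be the main obstacle.

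\textbf{Step 3: order the root moduli.} Evaluating the quadratic factor at $\lambda=-\theta$ gives
\begin{eqnarray*}
\alpha\theta^{2}-(1+\alpha)(1+\theta)\theta+\theta=-\theta(\theta+\alpha)<0.
\end{eqnarray*}
The quadratic has positive leading coefficient, so $-\theta$ lies strictly between its two roots, which gives $\lambda_{4}<-\theta$, i.e. $|\lambda_{4}|>\theta$. The product of the two roots is $\lambda_{3}\lambda_{4}=\theta/\alpha$, so $|\lambda_{3}|<1/\alpha<\theta$. The strict ordering of moduli is therefore $|\lambda_{4}|>\theta>\max\{1/\alpha,|\lambda_{3}|\}$.

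\textbf{Step 4: use positivity to pin down the coefficients.} If $a_{4}\neq0$, then $a_{4}\lambda_{4}^{s}$ strictly dominates all other terms for large $s$, and its sign alternates because $\lambda_{4}<0$. This contradicts positivity, so $a_{4}=0$ and $a_{1}+a_{2}+a_{3}=p_{0}=1$. If $a_{1}<0$, then $a_{1}\theta^{s}$ dominates the remaining terms and $p_{s}<0$ eventually, so $a_{1}\geq0$. Writing the solution in blocks $(p_{t},\ldots,p_{t+3})$ gives exactly the vector form in the statement.

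For the converse, every sequence of that form solves the recurrence, since each term is a characteristic solution. Together with strict positivity, $p_{0}=1$ and $(\tilde{p}_{0},\tilde{p}_{1})\in\mathcal{B}_{0}(\sigma_{0})$, this places it in $\tilde{\mathcal{H}}_{\tau k}$, which completes the equivalence.
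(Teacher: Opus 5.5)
Your proposal is correct and follows essentially the same route as the paper. Both reduce market clearing to the order-four constant-coefficient recurrence, factor its characteristic polynomial as $(\lambda-\theta)(\alpha\lambda-1)(\alpha\lambda^{2}+(1+\alpha)(1+\theta)\lambda+\theta)$, order the root moduli, and use positivity to force $a_{4}=0$ and $a_{1}\geq0$, with $p_{0}=1$ giving $a_{1}+a_{2}+a_{3}=1$; the only differences are cosmetic: you derive the recurrence explicitly from original-period clearing (the paper states it directly and uses a companion matrix), and you get $\vert\lambda_{4}\vert>\theta$ and $\vert\lambda_{3}\vert<1/\alpha$ by evaluating the quadratic at $-\theta$ and using $\lambda_{3}\lambda_{4}=\theta/\alpha$ rather than by squaring inequalities.
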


Lemma \ref{lemmaTailProneToSavings} allows us to use Proposition 7 from \textcite{Dognini_2025} to state that, for $\tilde{p}\in\tilde{\mathcal{H}}_{\tau k}$, $\tilde{p}\notin \tilde{\mathcal{H}}^{PO}_{\tau k}$ if, and only if, 
\begin{eqnarray*}
    \lim_{t\rightarrow\infty}\sum_{h\in\tilde{G}^{\tau k}_{t}}\frac{\tilde{s}^{h}(\tilde{p}_{t},\tilde{p}_{t+1})}{\tilde{H}^{\tau k}_{t}}=0.
\end{eqnarray*}
Notice then that if $a_{1}>0$ in Proposition \ref{propTailSetofEquilibria}, we have 
\begin{eqnarray*}
    \lim_{t\rightarrow\infty}\biggr(\frac{p_{t}}{p_{t}},\frac{p_{t+1}}{p_{t}},\frac{p_{t+2}}{p_{t}},\frac{p_{t+3}}{p_{t}}\biggr)=(1,\theta,\theta^{2},\theta^{3}).
\end{eqnarray*}
Therefore, stationarity of the tail economy $\tilde{\mathcal{T}_{k}}$, continuity and homogeneity of degree zero of real savings, and (\ref{eqRealSavingsTail}) imply
\begin{eqnarray*}
     \lim_{t\rightarrow\infty}\sum_{h\in\tilde{G}^{\tau k}_{t}}\frac{\tilde{s}^{h}(\tilde{p}_{t},\tilde{p}_{t+1})}{\tilde{H}^{\tau k}_{t}}&=&\lim_{t\rightarrow\infty}\sum_{h\in\tilde{G}^{\tau k}_{t}}\frac{\tilde{s}^{h}(p_{2t},p_{2t+1},p_{2t+2},p_{2t+3})}{\tilde{H}^{\tau k}_{t}}\\
     &=&\sum_{h\in\tilde{G}^{\tau k}_{0}}\frac{\tilde{s}^{h}(1,\theta,\theta^{2},\theta^{3})}{\tilde{H}^{\tau k}_{0}}\\
     &=&\frac{\theta^{2}(1+\theta)-((1+\theta)\theta^{2}-\alpha(\theta+\theta^{2})\theta)-\alpha (\theta^{2}+\theta^{3})}{(1+\theta+\theta^{2})(1+\alpha)(1+\theta)}e\\
     &=&0.
\end{eqnarray*}
If $a_{1}=0$ in Proposition \ref{propTailSetofEquilibria}, notice that we must have $a_{2}>0$. Since $\Vert \lambda_{3}\Vert<1/\alpha$, then 
\begin{eqnarray*}
    \lim_{t\rightarrow\infty}\biggr(\frac{p_{t}}{p_{t}},\frac{p_{t+1}}{p_{t}},\frac{p_{t+2}}{p_{t}},\frac{p_{t+3}}{p_{t}}\biggr)=\biggr(1,\frac{1}{\alpha},\frac{1}{\alpha^{2}},\frac{1}{\alpha^{3}}\biggr).
\end{eqnarray*}
By the same reasoning, we have
\begin{eqnarray*}
     \lim_{t\rightarrow\infty}\sum_{h\in\tilde{G}^{\tau k}_{t}}\frac{\tilde{s}^{h}(\tilde{p}_{t},\tilde{p}_{t+1})}{\tilde{H}^{\tau k}_{t}}
     &=&\sum_{h\in\tilde{G}^{\tau k}_{0}}\frac{\tilde{s}^{h}(1,\alpha^{-1},\alpha^{-2},\alpha^{-3})}{\tilde{H}^{\tau k}_{0}}\\
     &=&\frac{(\theta-\alpha^{-1})(2(\theta+\alpha^{-1})+\theta\alpha^{-1}+1)}{(1+\theta+\theta^{2})(2+\alpha+\alpha^{-1})}e\\
     &>&0.
\end{eqnarray*}
This leads us to the following corollary.
\begin{corollary}\label{corHPO}
 Let $\tilde{\mathcal{T}}_{k}$, $k\geq1$, be a tail economy described above, with $1+\theta+\alpha<\theta^{2}\alpha^{2}$ and $\min\{\alpha,\theta\}>1$, and $\tilde{\mathcal{H}}^{PO}_{\tau k}$ its subset of Pareto optimal equilibria. Also, let 
    \begin{eqnarray*}
        \lambda_{3}=\frac{-(1+\alpha)(1+\theta)+\sqrt{(1+\alpha)^{2}(1+\theta)^{2}-4\theta\alpha}}{2\alpha}.
    \end{eqnarray*}
    Then, $\tilde{p}\in \tilde{\mathcal{H}}^{PO}_{\tau k}$ if, and only if, $(\tilde{p}_{0},\tilde{p}_{1})\in\mathcal{B}_{0}(\sigma_{0})$ and there is $a_{3}\in(-1,1/\alpha \vert \lambda_{3}\vert)$ such that
    \begin{eqnarray*}
        \begin{bmatrix}
            p_{t} \\ p_{t+1} \\ p_{t+2} \\ p_{t+3}
        \end{bmatrix} = 
         \frac{\alpha^{-t}}{1+a_{3}} \begin{bmatrix}
            1 \\ 1/\alpha \\ 1/\alpha^{2} \\ 1/\alpha^{3}
        \end{bmatrix}+
          \frac{a_{3}\lambda^{t}_{3}}{1+a_{3}} \begin{bmatrix}
            1 \\ \lambda_{3} \\ \lambda_{3}^{2} \\ \lambda^{3}_{3}
        \end{bmatrix},
    \end{eqnarray*}
for $t\geq0$.
\end{corollary}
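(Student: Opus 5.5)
Starting from Proposition \ref{propTailSetofEquilibria}, every $\tilde{p}\in\tilde{\mathcal{H}}_{\tau k}$ has the form $a_{1}\theta^{t}v(\theta)+a_{2}\alpha^{-t}v(1/\alpha)+a_{3}\lambda_{3}^{t}v(\lambda_{3})$, where $v(\lambda)=(1,\lambda,\lambda^{2},\lambda^{3})$, $a_{1}\geq0$, $a_{1}+a_{2}+a_{3}=1$, and the sequence is strictly positive. Lemma \ref{lemmaTailProneToSavings}, which applies under $1+\theta+\alpha<\theta^{2}\alpha^{2}$, together with Proposition 7 of \textcite{Dognini_2025}, says that $\tilde{p}\notin\tilde{\mathcal{H}}^{PO}_{\tau k}$ if and only if per capita real savings tend to zero. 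The computations right before the corollary give the two cases. If $a_{1}>0$, normalized prices converge to $v(\theta)$ and savings tend to $0$, so the equilibrium is not Pareto optimal. If $a_{1}=0$, then $a_{2}>0$, normalized prices converge to $v(1/\alpha)$, and savings tend to a strictly positive limit, so the equilibrium is Pareto optimal. Hence $\tilde{\mathcal{H}}^{PO}_{\tau k}$ consists exactly of the equilibria in Proposition \ref{propTailSetofEquilibria} with $a_{1}=0$.

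Setting $a_{1}=0$ gives $a_{2}=1-a_{3}$. This has to be reconciled with the parametrization in the statement, whose coefficients are $1/(1+a_{3})$ and $a_{3}/(1+a_{3})$. I would reparametrize by letting $b=a_{3}$ be the ratio of the $\lambda_{3}$-coefficient to the $\alpha^{-1}$-coefficient. The normalization $p_{0}=1$ then gives coefficients $1/(1+b)$ and $b/(1+b)$, which requires $b\neq-1$. So I must check that the paper's normalization $a_{1}+a_{2}+a_{3}=1$ is simply $p_{0}=1$, and restate the $a_{1}=0$ family in this form.

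It remains to characterize positivity. I need $\alpha^{-t}+b\lambda_{3}^{t}>0$ for all $t\geq0$ (the statement's index), together with $1+b>0$.

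First, $\lambda_{3}<0$: the discriminant is less than $((1+\alpha)(1+\theta))^{2}$ because $\theta\alpha>0$. Also $|\lambda_{3}|<1/\alpha$, as the paper notes; I would verify this by evaluating the characteristic quadratic $\alpha\lambda^{2}+(1+\alpha)(1+\theta)\lambda+\theta$ at $\lambda=-1/\alpha$, where it is negative for $\alpha,\theta>1$. Writing $\mu=\alpha|\lambda_{3}|\in(0,1)$, the condition becomes $1+b(-\mu)^{t}>0$ for all $t$.
\begin{itemize}
\item At $t=0$ this is $b>-1$.
\item For even $t\geq2$ it holds automatically once $b>-1$, since $\mu^{t}<1$.
\item For odd $t$, the binding case is $t=1$, which gives $b<1/\mu=1/(\alpha|\lambda_{3}|)$. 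Larger odd $t$ impose weaker bounds.
\end{itemize}
This yields exactly $a_{3}\in(-1,1/\alpha|\lambda_{3}|)$. The condition $(\tilde p_{0},\tilde p_{1})\in\mathcal{B}_{0}(\sigma_{0})$ carries over unchanged from the definition of the equilibrium set.

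The main obstacle is bookkeeping rather than ideas. The proposition's four-vector indexing means each block is $p_{2t},\dots,p_{2t+3}$ of the relabeled economy, while the formula uses the index $t$, so I must confirm that positivity over the relevant indices reduces to the scalar inequality above. I must also make sure the reparametrization of $(a_{2},a_{3})$ is a bijection onto the stated interval.
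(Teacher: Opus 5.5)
Your proposal is correct and follows essentially the same route as the paper, which derives the corollary from Proposition~\ref{propTailSetofEquilibria}, Lemma~\ref{lemmaTailProneToSavings} combined with Proposition 7 of \textcite{Dognini_2025}, and the two limiting savings computations showing that Pareto optimality is equivalent to $a_{1}=0$. You also spell out steps the paper leaves implicit, and they are right: the corollary's $a_{3}$ is the ratio of the $\lambda_{3}$-coefficient to the $\alpha^{-1}$-coefficient under $p_{0}=1$, the condition $1+a_{3}(\alpha\lambda_{3})^{t}>0$ for all $t\geq0$ yields exactly $a_{3}\in(-1,1/(\alpha\vert\lambda_{3}\vert))$, and your indexing worry dissolves because the first coordinate of the block identity already gives $p_{t}=\sum_{i}a_{i}\lambda_{i}^{t}$ for every $t\geq0$.
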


Once the tail economy $\tilde{\mathcal{T}}_{k}$, $k\geq1$, is defined and its set of Pareto optimal equilibria is given by Corollary \ref{corHPO}, to calculate $\mathcal{H}^{PO}_{k}$ we must solve
\begin{eqnarray*}
    \sum_{h\in G_{0}}z^{h}_{2}(p_{0},p_{1},p_{2})+\sum_{h\in G_{1}}z^{h}_{2}(p_{1},p_{2},p_{3})+\sum_{h\in G_{2}}z^{h}_{2}(p_{2},p_{3},p_{4})&=&0\\
    &\vdots&\\
    \sum_{h\in G_{2k}}z^{h}_{2k+2}(p_{2k},\ldots)+\sum_{h\in G_{2k+1}}z^{h}_{2k+2}(p_{2k+1},\ldots)+\sum_{h\in G^{\tau k}_{0}}z^{h}_{0}(p_{2k+2},p_{2k+3},p_{2k+4})&=&0\\
    \sum_{h\in G_{2k+1}}z^{h}_{2k+3}(p_{2k+1},\ldots)+\sum_{h\in G^{\tau k}_{0}}z^{h}_{1}(p_{2k+2},p_{2k+3},p_{2k+4})+\sum_{h\in G^{\tau k}_{1}}\tilde{z}^{h}_{1}(p_{2k+3},p_{2k+4},p_{2k+5})&=&0,
\end{eqnarray*}
with $(p_{2k+2},p_{2k+3},p_{2k+4},p_{2k+5})=(1,\alpha^{-1},\alpha^{-2},\alpha^{-3})/(1+a_{3})+a_{3}(1,\lambda_{3},\lambda_{3}^{2},\lambda_{3}^{3})/(1+a_{3})$, $a_{3}\in (-1,1/\alpha\vert\lambda_{3}\vert)$, so that $(p_{2k+2},p_{2k+3},p_{2k+4},p_{2k+5})\in\tilde{\mathcal{H}}^{PO}_{\tau k}$ due to Corollary \ref{corHPO}.
\end{example}

A closer look at Examples \ref{ex2} and \ref{ex3} reveals that the terminal boundary conditions are one-dimensional. So why bother working with the full-lifespan case? Because when designing optimally balanced social security systems, the larger dimension of the boundary conditions implies that we generally have more options to choose from. Imagine you have an ongoing system, and in some sense, you would like to choose the optimally balanced system that is closer to it. Notice that, with this in mind, equilibrium multiplicity is not a problem but actually a desirable feature of the fictitious overlapping generations economy. 

And why did we not get this larger dimension in Example \ref{ex3}? Briefly, because we have chosen a tail economy that is too symmetric. We can elaborate further with the following intuitive argument. 

Notice that an economy with a lifespan $S\geq2$ is equivalent to a two-period model with $S-1$ goods in each period. As Example \ref{ex2} reveals, the minimum-lifespan tail economies have $S-1$ goods in $t=0$ and a single good afterward. Therefore, if the prices $(p_{0},p_{1})\in\mathbb{R}^{S}_{++}$ of an equilibrium are known, one can use the one-dimensional market clearing equation in $t=1$ to calculate $p_{2}>0$; then, the one-dimensional market clearing equation in $t=2$ to calculate $p_{3}>0$ and so forth. Since we normalize $p_{01}=1$, we conclude that the set of equilibria of these minimum-lifespan tail economies has a maximum dimension of $S-1$.

On the other hand, as Example \ref{ex3} reveals, the full-lifespan tail economies have $S-1$ goods in all periods $t\geq0$. Therefore, if the prices $(p_{0},p_{1})\in\mathbb{R}^{2S-2}_{++}$ of an equilibrium are known, one can use the $S-1$-dimensional market clearing equation in $t=1$ to calculate $p_{2}\in\mathbb{R}^{S-1}_{++}$; then, the $S-1$-dimensional market clearing equation in $t=2$ to calculate $p_{3}\in\mathbb{R}^{S-1}_{++}$ and so forth. Since we normalize $p_{01}=1$, we conclude that the set of equilibria of these full-dimension tail economies has a maximum dimension of $2S-3$. For $S=3$, this leads to a maximum dimension of $3$. However, Proposition \ref{propTailSetofEquilibria} reveals that the set of equilibria of our full-lifespan tail economy in Example \ref{ex3} has dimension $2<3$, and this leads to the equal dimension of the sets of Pareto optimal equilibria that furnish the terminal boundary conditions.

\section{Demographic transition of Brazil, China, India, Italy, and the United States}\label{sec4}
In this section, we apply the backward calculation algorithm to a three-period overlapping generations economy that adopts the demographic and productivity dynamics of Brazil, China, India, Italy, and the United States from 1950 to 2070. Therefore, the lifespan is $S=3$ with each period representing two decades (the first two periods represent ages between 20 and 59, and the last represents ages between 60 and 79), so that $t=0$ starts in 1950 and $t=5$ ends in 2070.

Generation $G_{t}$, $0\leq t\leq 5$, is composed of $H_{t}$ households, which is determined by the average number of households aged between 20 and 39 in the respective time window according to the latest World Population Prospects \parencite{UN2024Pop}. For example, $G_{0}$ is composed of the average number of households with ages between 20 and 39 in the 20-year time window from 1950 to 1969.

Household $h\in G_{t}$, $0\leq t\leq 5$, holds a common endowment $e^{h}=(e_{t},e_{t},\phi e_{t})\in\mathbb{R}^{3}_{+}$, $\phi\in[0,1]$, and has a common utility function 
$u^{h}:\mathbb{R}^{3}_{+}\rightarrow\mathbb{R}$,
\begin{eqnarray*}
    u^{h}(c_{t},c_{t+1},c_{t+2})=u(c_{t},c_{t+1},c_{t+2})= \log c_{t}+ \theta \log c_{t+1}+ \theta^{2}\log c_{t+2},
\end{eqnarray*}
for $\theta\geq1$. The endowment $e_{t}$ is given by the GDP per capita in the middle-year of the first working-life period according to \textcite{WorldBank2026GDP}. For example, $e_{0}, e_{3}>0$ are given by the GDP per capita in 1960 and 2020, respectively. The table below gathers the values of $\{H_{t}\}_{0\leq t\leq 5}$ and $\{e_{t}\}_{0\leq t\leq 3}$ for each country.

\begin{table}[H]
\centering
\begin{tabular}{ccccccccccc}
    & \multicolumn{2}{c}{Brazil} & \multicolumn{2}{c}{China} & \multicolumn{2}{c}{India} & \multicolumn{2}{c}{Italy} & \multicolumn{2}{c}{United States} \\ \cline{2-11} 
    & $H_{t}$ & $e_{t}$ & $H_{t}$ & $e_{t}$ & $H_{t}$ & $e_{t}$ & $H_{t}$ & $e_{t}$ & $H_{t}$ & $e_{t}$ \\ \hline
t=0 & 20 & 2603 & 180 & 241 & 122 & 313 & 15 & 9731 & 48 & 18899\\ \hline
t=1 & 36 & 6472 & 291 & 436 & 193 & 394 & 16 & 21901 & 71 & 30930\\ \hline
t=2 & 56 & 6817 & 439 & 2237 & 320 & 756 & 17 & 32609 & 83 & 48379\\ \hline
t=3 & 65 & 9435 & 406 & 10573 & 455 & 1806 & 14 & 29.469 & 91 & 59194\\ \hline
t=4 & 57 & - & 318 & - & 484 & - & 12 & - & 97 & - \\ \hline
t=5 & 47 & - & 190 & - & 427 & - & 9 & - & 95 & - \\ \hline
\end{tabular}
\caption{GDP per capita in 2015 US\$ \parencite{WorldBank2026GDP} is represented by $e_{t}$, $0\leq t\leq 3$. $H_{t}$, $0\leq t\leq 5$, is given by the two-decade average number (in millions) of people with ages between 20 and 39 \parencite{UN2024Pop}}
\label{table1}
\end{table}

Let $\beta_{t}=e_{t+1}/e_{t}$, $0\leq t\leq 2$, so that the (finite) sequences $\{\alpha_{t}\}_{0\leq t\leq 5}$ and $\{\beta_{t}\}_{0\leq t\leq 2}$ can be seen as representing the demographic and productivity dynamics of the economy, respectively. The table below gathers these values.

\begin{table}[H]
\centering
\begin{tabular}{ccccccccccc}
    & \multicolumn{2}{c}{Brazil} & \multicolumn{2}{c}{China} & \multicolumn{2}{c}{India} & \multicolumn{2}{c}{Italy} & \multicolumn{2}{c}{United States} \\ \cline{2-11} 
    & $\alpha_{t}$ & $\beta_{t}$ & $\alpha_{t}$ & $\beta_{t}$ & $\alpha_{t}$ & $\beta_{t}$ & $\alpha_{t}$ & $\beta_{t}$ & $\alpha_{t}$ & $\beta_{t}$ \\ \hline
t=0 & 1.78 & 2.49 & 1.62 & 1.81 & 1.58 & 1.26 & 1.07 & 2.25 & 1.49 & 1.64\\ \hline
t=1 & 1.58 & 1.05 & 1.51 & 5.13 & 1.66 & 1.92 & 1.09 & 1.49 & 1.17 & 1.56\\ \hline
t=2 & 1.16 & 1.38 & 0.92 & 4.73 & 1.42 & 2.39 & 0.8 & 0.9 & 1.10 & 1.22\\ \hline
t=3 & 0.87 & - & 0.78 & - & 1.06 & - & 0.87 & - & 1.06 & -\\ \hline
t=4 & 0.82 & - & 0.60 & - & 0.88 & - & 0.79 & - & 0.98 & - \\ \hline
\end{tabular}
\caption{The sequences $\{\alpha_{t}\}_{0\leq t\leq 4}$ and $\{\beta_{t}\}_{0\leq t\leq 2}$.}
\label{table2}
\end{table}
 
Notice that our productivity data in Table \ref{table1} only goes until $t=3$, with $e_{3}>0$ representing the 2020 GDP per capita. Since we need to define the endowments $e_{t}>0$, $4\leq t\leq 5$, we may do so by letting $\beta_{t}=\beta_{2}$, $3\leq t\leq 4$ (i.e., we forecast GDP per capita simply by maintaining the last growth rate from Table \ref{table2}). With this definition, it is convenient to let $\gamma_{t}=\alpha_{t}\beta_{t}$, $0\leq t\leq 4$, and the next table gathers these values.

\begin{table}[H]
\centering
\begin{tabular}{cccccc}
    & Brazil       & China        & India        & Italy        & United States \\ \cline{2-6} 
    & $\gamma_{t}$ & $\gamma_{t}$ & $\gamma_{t}$ & $\gamma_{t}$ & $\gamma_{t}$  \\ \hline
t=0 & 4.42 & 2.93 & 1.99 & 2.41 & 2.44 \\ \hline
t=1 & 1.67 & 7.75 & 3.18 & 1.62 & 1.83 \\ \hline
t=2 & 1.61 & 4.37 & 3.39 & 0.72 & 1.34 \\ \hline
t=3 & 1.21 & 3.71 & 2.54 & 0.78 & 1.29 \\ \hline
t=4 & 1.14 & 2.82 & 2.10 & 0.71 & 1.20 \\ \hline
\end{tabular}
\caption{The sequence $\{\gamma_{t}\}_{0\leq t\leq 4}$.}
\label{table3}
\end{table}

Our goal is to choose a convenient tail economy starting at $t=6$ (this is equivalent to period $t=3$ under \textcite{BalaskoCassShell_1980} relabeling) and then work with $\tilde{\mathcal{E}}_{2}$. Notice that, in fact, we do not know the behavior of all generations from $\tilde{\mathcal{E}}$, since we have only actually modeled the first three. 

The rationale we adopt is that by using a well-chosen $\tilde{\mathcal{E}}_{2}$ we can reasonably approximate a Pareto optimal equilibrium of $\tilde{\mathcal{E}}$ due to Theorem \ref{theoAlgorithmFinal}. Let $\tilde{\mathcal{E}}_{2}$ be defined through the full-lifespan tail economy from Example \ref{ex3}, with values of $H^{\tau}_{0}$, $e_{\tau}$, $\alpha_{\tau}$, and $\theta_{\tau}$ later specified. Then, equilibrium equations in periods $2\leq t\leq 7$ can be written as
\begin{eqnarray*}
    \frac{(p_{0}+p_{1}+\phi p_{2})\theta^{2}}{(1+\theta+\theta^{2})p_{2}} + \gamma_{0}\frac{(p_{1}+p_{2}+\phi p_{3})\theta}{(1+\theta+\theta^{2})p_{2}}+\gamma_{0}\gamma_{1}\frac{(p_{2}+p_{3}+\phi p_{4})}{(1+\theta+\theta^{2})p_{2}}&=&\phi +\gamma_{0}+\gamma_{0}\gamma_{1}\\
    &\vdots&\\
    \frac{(p_{3}+p_{4}+\phi p_{5})\theta^{2}}{(1+\theta+\theta^{2})p_{5}} + \gamma_{3}\frac{(p_{4}+p_{5}+\phi p_{6})\theta}{(1+\theta+\theta^{2})p_{5}}+\gamma_{3}\gamma_{4}\frac{(p_{5}+p_{6}+\phi p_{7})}{(1+\theta+\theta^{2})p_{5}}&=&\phi+\gamma_{3}+\gamma_{3}\gamma_{4}\\
    \frac{(p_{4}+p_{5}+\phi p_{6})\theta^{2}}{(1+\theta+\theta^{2})p_{6}} + \gamma_{4}\frac{(p_{5}+p_{6}+\phi p_{7})\theta}{(1+\theta+\theta^{2})p_{6}}+\frac{H^{\tau}_{0}e_{\tau}}{H_{4}e_{4}}\frac{(p_{6}+p_{7}+p_{8})}{(1+\theta_{\tau}+\theta_{\tau}^{2})p_{6}}&=&\phi+\gamma_{4}+\frac{H^{\tau}_{0}e_{\tau}}{H_{4}e_{4}}\\
    \frac{(p_{5}+p_{6}+\phi p_{7})\theta^{2}}{(1+\theta+\theta^{2})p_{7}} +\frac{H^{\tau}_{0}e_{\tau}}{H_{5}e_{5}}\frac{(p_{6}+p_{7}+p_{8})\theta_{\tau}}{(1+\theta_{\tau}+\theta_{\tau}^{2})p_{7}}+\frac{H^{\tau}_{1}e_{\tau}}{H_{5}e_{5}}\frac{(p_{7}+p_{8}+p_{9})}{(1+\theta_{\tau}+\theta_{\tau}^{2})p_{7}}&=&\phi+\frac{(1+\alpha_{\tau})H^{\tau}_{0}e_{\tau}}{H_{5}e_{5}}.
\end{eqnarray*}
Let $H^{\tau}_{0}=\alpha_{4}H_{5}$ and $e_{\tau}=\beta_{4}e_{5}$ be such that the demographic and productivity growth rates remain unchanged from period $t=4$ to period $t=5$ (i.e., $\alpha_{5}=\alpha_{4}$ and $\beta_{5}=\beta_{4}$). Then, equilibrium equations become
\begin{equation}\label{eqSystemEqEquations}
    \begin{aligned}
    \frac{(p_{0}+p_{1}+\phi p_{2})\theta^{2}}{(1+\theta+\theta^{2})p_{2}} + \gamma_{0}\frac{(p_{1}+p_{2}+\phi p_{3})\theta}{(1+\theta+\theta^{2})p_{2}}+\gamma_{0}\gamma_{1}\frac{(p_{2}+p_{3}+\phi p_{4})}{(1+\theta+\theta^{2})p_{2}}&=&\phi +\gamma_{0}+\gamma_{0}\gamma_{1}\\
    &\vdots&\\
    \frac{(p_{3}+p_{4}+\phi p_{5})\theta^{2}}{(1+\theta+\theta^{2})p_{5}} + \gamma_{3}\frac{(p_{4}+p_{5}+\phi p_{6})\theta}{(1+\theta+\theta^{2})p_{5}}+\gamma_{3}\gamma_{4}\frac{(p_{5}+p_{6}+\phi p_{7})}{(1+\theta+\theta^{2})p_{5}}&=&\phi+\gamma_{3}+\gamma_{3}\gamma_{4}\\
    \frac{(p_{4}+p_{5}+\phi p_{6})\theta^{2}}{(1+\theta+\theta^{2})p_{6}} + \gamma_{4}\frac{(p_{5}+p_{6}+\phi p_{7})\theta}{(1+\theta+\theta^{2})p_{6}}+\gamma_{4}^{2}\frac{(p_{6}+p_{7}+p_{8})}{(1+\theta_{\tau}+\theta_{\tau}^{2})p_{6}}&=&\phi+\gamma_{4}+\gamma_{4}^{2}\\
    \frac{(p_{5}+p_{6}+\phi p_{7})\theta^{2}}{(1+\theta+\theta^{2})p_{7}} +\gamma_{4}\frac{(p_{6}+p_{7}+p_{8})\theta_{\tau}}{(1+\theta_{\tau}+\theta_{\tau}^{2})p_{7}}+\gamma_{4}\alpha_{\tau}\frac{(p_{7}+p_{8}+p_{9})}{(1+\theta_{\tau}+\theta_{\tau}^{2})p_{7}}&=&\phi+(1+\alpha_{\tau})\gamma_{4}.
    \end{aligned}
\end{equation}

Corollary \ref{corHPO} implies that the equilibrium prices are calculated departing from the ``boundary prices'' given by 
    \begin{eqnarray*}
        \begin{bmatrix}
            p_{6} \\ p_{7} \\ p_{8} \\ p_{9}
        \end{bmatrix} = 
         \frac{1}{1+a_{3}}\begin{bmatrix}
            1 \\ 1/\alpha_{\tau}\\ 1/\alpha^{2}_{\tau} \\ 1/\alpha^{3}_{\tau}
        \end{bmatrix}+
         \frac{a_{3}}{1+a_{3}} \begin{bmatrix}
            1 \\ \lambda_{3} \\ \lambda_{3}^{2} \\ \lambda^{3}_{3}
        \end{bmatrix},
    \end{eqnarray*}
with $a_{3}\in(-1,1/\alpha_{\tau} \vert \lambda_{3}\vert)$ and
    \begin{eqnarray*}
        \lambda_{3}=\frac{-(1+\alpha_{\tau})(1+\theta_{\tau})+\sqrt{(1+\alpha_{\tau})^{2}(1+\theta_{\tau})^{2}-4\alpha_{\tau}\theta_{\tau}}}{2\alpha_{\tau}}.
    \end{eqnarray*}
Lemma \ref{lemmaTailProneToSavings} implies that by choosing $\theta_{\tau}\geq1$ such that $1+\theta_{\tau}+\alpha_{\tau}<\theta_{\tau}^{2}\alpha_{\tau}^{2}$, we obtain a prone-to-savings tail economy. The next result is analogous to Lemma \ref{lemmaTailProneToSavings} (actually, it is a generalization of it) and indicates values of $\theta\geq1$ that lead to a prone-to-savings economy $\tilde{\mathcal{E}_{2}}$.
\begin{lemma}\label{LemmaBrazilChinaUSPronetoSavings}
    Let $\tilde{\mathcal{E}}_{2}$ be the economy described above. If $\gamma_{2t}+(1+\theta)\phi<\tilde{\alpha}_{t}\theta^{2}$, $0\leq t\leq 2$, then there are $\varepsilon,\delta>0$ such that
    \begin{eqnarray*}
    \sum_{h\in\tilde{G}_{t}}\frac{\tilde{s}^{h}(\tilde{p}_{t},\tilde{p}_{t+1})}{\tilde{H}_{t}}\leq\delta
    \implies \frac{\Vert \tilde{p}_{t}\Vert}{\Vert \tilde{p}_{t+1}\Vert}\leq \frac{\tilde{\alpha}_{t}}{1+\varepsilon},
\end{eqnarray*}
for $(\tilde{p}_{t},\tilde{p}_{t+1})\in \mathbb{R}^{L_{t}+L_{t+1}}_{++}$, $0\leq t\leq 2$. 
\end{lemma}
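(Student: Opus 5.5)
The plan is a direct computation. Write the per-capita real savings of each relabeled generation $\tilde{G}_{t}=G_{2t}\cup G_{2t+1}$, $0\leq t\leq 2$, in closed form, as was done for the tail in (\ref{eqRealSavingsTail}). Then show that a small upper bound on savings forces $\Vert\tilde{p}_{t}\Vert$ to be small relative to $\Vert\tilde{p}_{t+1}\Vert$. Since there are only three periods, uniform constants $\varepsilon,\delta$ come from taking minima over $t\in\{0,1,2\}$. Throughout I write $(p_{2t},p_{2t+1})=\tilde{p}_{t}$ and $(p_{2t+2},p_{2t+3})=\tilde{p}_{t+1}$, and I use the $\ell_{1}$ norm on price vectors. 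If the paper's norm is different, the argument goes through with norm-equivalence constants, but the sharp threshold in the hypothesis is natural for $\ell_{1}$.

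\textbf{Step 1 (savings of each cohort).} Let $D=1+\theta+\theta^{2}$.
\begin{itemize}
\item A household in $G_{2t}$ has wealth $W=e_{2t}(p_{2t}+p_{2t+1}+\phi p_{2t+2})$. With log utility it spends $W(1+\theta)/D$ on the two goods of relabeled period $t$. Its nominal savings are therefore
\[
e_{2t}(p_{2t}+p_{2t+1})-\frac{W(1+\theta)}{D}=\frac{e_{2t}}{D}\bigl(\theta^{2}\Vert\tilde{p}_{t}\Vert-(1+\theta)\phi p_{2t+2}\bigr).
\]
\item A household in $G_{2t+1}$ owns only good $2t+1$ within period $t$, and spends $W'/D$ on it, with $W'=e_{2t+1}(p_{2t+1}+p_{2t+2}+\phi p_{2t+3})$. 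Its nominal savings are
\[
\frac{e_{2t+1}}{D}\bigl((\theta+\theta^{2})p_{2t+1}-p_{2t+2}-\phi p_{2t+3}\bigr).
\]
\end{itemize}
Summing over both cohorts, using $H_{2t+1}e_{2t+1}=\gamma_{2t}H_{2t}e_{2t}$, and dividing by $\tilde{H}_{t}\Vert\tilde{p}_{t}\Vert$, the per-capita real savings equal
\[
c_{t}\,\frac{\theta^{2}\Vert\tilde{p}_{t}\Vert-(1+\theta)\phi p_{2t+2}+\gamma_{2t}\bigl((\theta+\theta^{2})p_{2t+1}-p_{2t+2}-\phi p_{2t+3}\bigr)}{\Vert\tilde{p}_{t}\Vert},
\]
where $c_{t}=H_{2t}e_{2t}/(D\tilde{H}_{t})>0$ is a fixed constant.

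\textbf{Step 2 (the inequality).} Assume savings $\leq\delta$.
\begin{itemize}
\item Drop the nonnegative term $\gamma_{2t}(\theta+\theta^{2})p_{2t+1}$; this only strengthens the inequality. We get
\[
\bigl(\theta^{2}-\delta/c_{t}\bigr)\Vert\tilde{p}_{t}\Vert\leq\bigl((1+\theta)\phi+\gamma_{2t}\bigr)p_{2t+2}+\gamma_{2t}\phi\, p_{2t+3}.
\]
\item Since $\phi\leq1$, both coefficients on the right are at most $(1+\theta)\phi+\gamma_{2t}$. So the right side is at most $(\gamma_{2t}+(1+\theta)\phi)\Vert\tilde{p}_{t+1}\Vert$ in $\ell_{1}$.
\item For $\delta<c_{t}\theta^{2}$ this gives
\[
\frac{\Vert\tilde{p}_{t}\Vert}{\Vert\tilde{p}_{t+1}\Vert}\leq\frac{\gamma_{2t}+(1+\theta)\phi}{\theta^{2}-\delta/c_{t}}.
\]
\end{itemize}
This holds for all strictly positive prices, so no compactness or restriction to $\mathcal{B}_{t}(\sigma_{t})$ is needed.

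\textbf{Step 3 (choice of constants).} By hypothesis $\gamma_{2t}+(1+\theta)\phi<\tilde{\alpha}_{t}\theta^{2}$. Hence there is $\varepsilon_{t}>0$ with $\gamma_{2t}+(1+\theta)\phi<\tilde{\alpha}_{t}\theta^{2}/(1+2\varepsilon_{t})$. Then choose $\delta_{t}>0$ small enough that $(\theta^{2}-\delta_{t}/c_{t})(1+2\varepsilon_{t})\geq\theta^{2}(1+\varepsilon_{t})$. Finally set $\varepsilon=\min_{t}\varepsilon_{t}$ and $\delta=\min_{t}\delta_{t}$ over $t\in\{0,1,2\}$.

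The main obstacle I expect is the bookkeeping in Step 1, specifically getting the cohort weights $H_{2t+1}e_{2t+1}/(H_{2t}e_{2t})=\gamma_{2t}$ and the $\phi$ placements right. A second point is the norm convention: with a Euclidean norm, the bound $p_{2t+2}+p_{2t+3}\leq\Vert\tilde{p}_{t+1}\Vert$ fails. In that case I would first try to verify that the paper's $\Vert\cdot\Vert$ in the savings definition is $\ell_{1}$, consistent with the normalization $p_{2k+2}+p_{2k+3}=1$ used in Example \ref{ex2}.
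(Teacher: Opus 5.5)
Your proof is correct and follows essentially the same route as the paper. You compute per-capita real savings of $\tilde{G}_{t}=G_{2t}\cup G_{2t+1}$ via $H_{2t+1}e_{2t+1}=\gamma_{2t}H_{2t}e_{2t}$, drop the nonnegative $p_{2t+1}$ term and use $\phi\leq 1$ to reach the paper's lower bound $\frac{e_{2t}}{(1+\theta+\theta^{2})(1+\alpha_{2t})}\bigl(\theta^{2}-(\gamma_{2t}+(1+\theta)\phi)\Vert\tilde{p}_{t+1}\Vert/\Vert\tilde{p}_{t}\Vert\bigr)$ (your $c_{t}$ is exactly this prefactor), and then pick $\delta,\varepsilon$ from the strict hypothesis, with your $\ell_{1}$ reading of $\Vert\cdot\Vert$ matching the paper's convention and your explicit choice of constants (minima over $t\in\{0,1,2\}$) being more careful than the paper's.
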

Let $l:\mathbb{R}^{3}_{++}\rightarrow\mathbb{R}_{++}$ be given by
\begin{eqnarray*}
    l(\phi,\tilde{\alpha}_{t},\gamma_{2t})=\frac{\phi+\sqrt{\phi^{2}+4\tilde{\alpha}_{t}(\gamma_{2t}+\phi)}}{2\tilde{\alpha}_{t}},
\end{eqnarray*}
and notice that, for $\theta>0$, we have
\begin{eqnarray}\label{eqIff}
    \gamma_{2t}+(1+\theta)\phi<\tilde{\alpha}_{t}\theta^{2}\iff \theta>l(\phi,\tilde{\alpha}_{t},\gamma_{2t}),
\end{eqnarray}
for $0\leq t\leq 2$. Also, Following Example \ref{ex1}, we have
\begin{eqnarray*}
\tilde{\alpha}_{t}
=\frac{\alpha_{2t}\alpha_{2t+1}(1+\alpha_{2t+2})}{1+\alpha_{2t}},
\end{eqnarray*}
for $t=0,1,$ and
\begin{eqnarray*}
\tilde{\alpha}_{2}
=\frac{\alpha_{4}^{2}(1+\alpha_{\tau})}{1+\alpha_{4}}
\end{eqnarray*}
Let $\alpha_{\tau}=\gamma_{4}$, so that $\tilde{\alpha}_{2}=\alpha_{4}^{2}$. The following table gathers the lower bounds on $\theta$ from Lemma \ref{LemmaBrazilChinaUSPronetoSavings} and (\ref{eqIff}) for $\phi=0.2$.
\begin{table}[H]
\centering
\begin{tabular}{cccccc}
    & Brazil       & China        & India        & Italy        & United States \\ \cline{2-6}
    & $l(\phi,\tilde{\alpha}_{t},\gamma_{2t})$ & $l(\phi,\tilde{\alpha}_{t},\gamma_{2t})$ & $l(\phi,\tilde{\alpha}_{t},\gamma_{2t})$ & $l(\phi,\tilde{\alpha}_{t},\gamma_{2t})$ & $l(\phi,\tilde{\alpha}_{t},\gamma_{2t})$  \\ \hline
t=0 & 1.50 & 1.28 & 0.90 & 1.54 & 1.29 \\ \hline
t=1 & 1.57 & 2.69 & 1.70 & 1.02 & 1.11 \\ \hline
t=2 & 1.56 & 2.82 & 1.65 & 1.07 & 1.12 \\ \hline
\end{tabular}
\caption{The lower bounds for $\theta\geq1$ given by Lemma \ref{LemmaBrazilChinaUSPronetoSavings} for $\phi=0.2$.}
\label{table3}
\end{table}
In order to have common parameters for all countries, we adopt, henceforth, $\theta=\theta_{\tau}=2.82$. Before moving forward with the analysis, it is instructive to observe that the utility maximization problem of household $h\in G_{t}$, $t\geq0$, is normally written as
\begin{eqnarray*}
    \textrm{max}_{c\in\mathbb{R}^{3}_{+}}& u(c_{t},c_{t+1},c_{t+2}) \\
    \textrm{ s.t. }& (p_{t},p_{t+1},p_{t+2})\cdot (c-e^{h})=0.
\end{eqnarray*}
for $(p_{t},p_{t+1},p_{t+2})\in\mathbb{R}^{3}_{++}$, although it can also be written sequentially as
\begin{eqnarray*}
    \textrm{max}_{c\in\mathbb{R}^{3}_{+}}& u(c_{t},c_{t+1},c_{t+2}) \\
    \textrm{ s.t. }& c_{t}=e_{t}-s_{t}\\
    & c_{t+1}=e_{t}+\frac{p_{t}}{p_{t+1}}s_{t}-s_{t+1}\\
    & c_{t+2}=\phi e_{t}+\frac{p_{t+1}}{p_{t+2}}s_{t+1},
\end{eqnarray*}
so that the return rates $r_{t}=p_{t}/p_{t+1}$, $t\geq0$, bear a direct economic interpretation in terms of the returns on real savings. For instance, the graphic below illustrates the return rates implied by Lemma \ref{lemmaTailProneToSavings} for different values of $a_{3}\in(-1,1/\alpha_{\tau}\vert\lambda_{3}\vert)\approx(-1,2.43)$.

\begin{figure}[h]
\includegraphics[width=11cm]{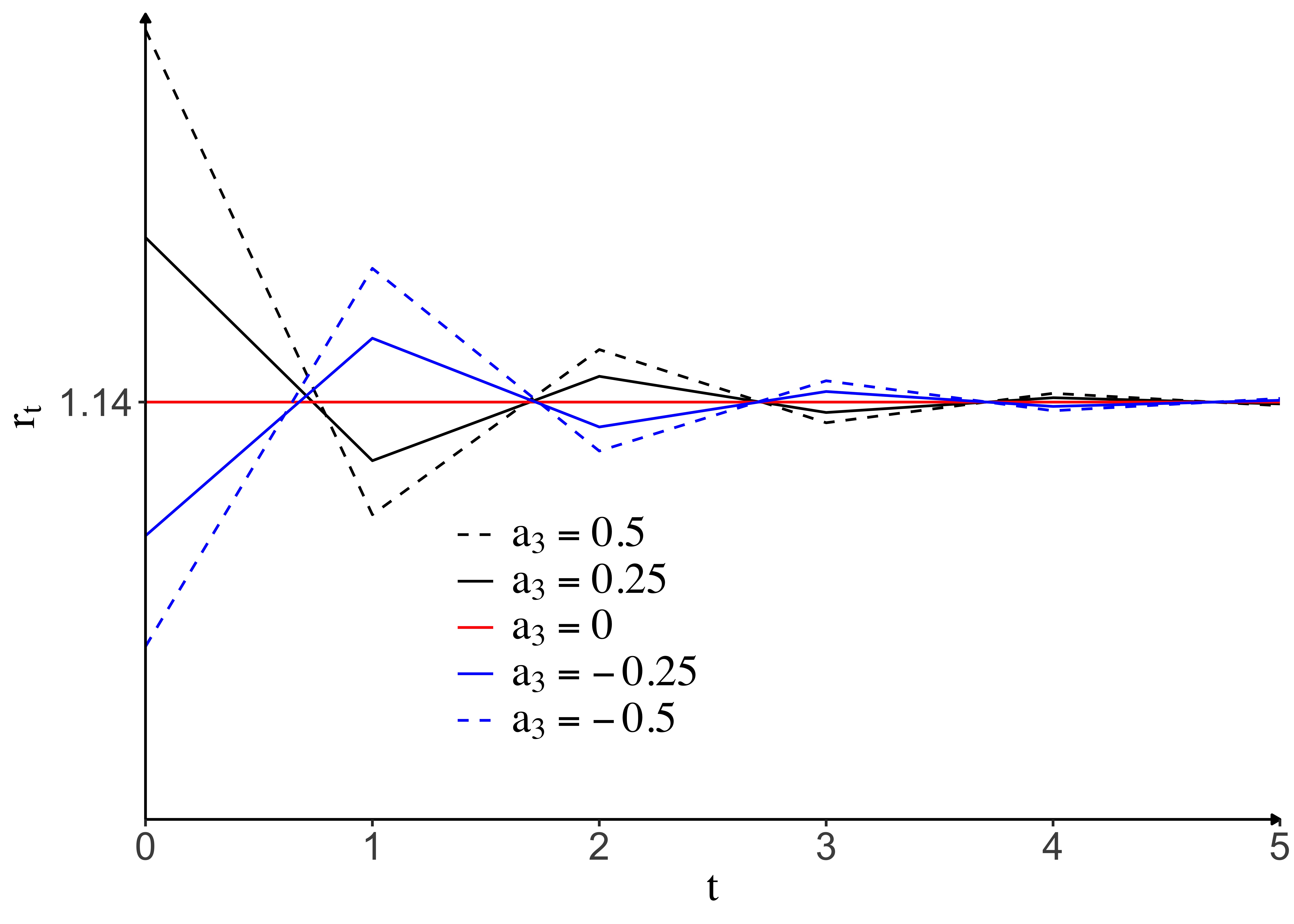}
\centering
\caption{Optimal return rates given by Corollary \ref{corHPO} in the tail economy of Brazil (i.e., $\theta_{\tau}=2.82$ and $\alpha_{\tau}=\gamma_{4}=1.14$) for different values of $a_{3}\in(-1,1/\alpha_{\tau}\vert\lambda_{3}\vert)=(-1,2.43)$.}
\label{fig2}
\end{figure}

Every value of $a_{3}\in(-1,1/\alpha_{\tau}\vert\lambda_{3}\vert)\approx(-1,2.43)$ furnishes a possible optimal return rate sequence in Figure \ref{fig2}. Our next step, therefore, is to depart from these optimal return rates (i.e., optimal prices) and calculate backward the remaining ones according to market clearing equations (\ref{eqSystemEqEquations}). The solutions are illustrated in Figure \ref{fig3}.

\begin{figure}[h]
\includegraphics[width=11cm]{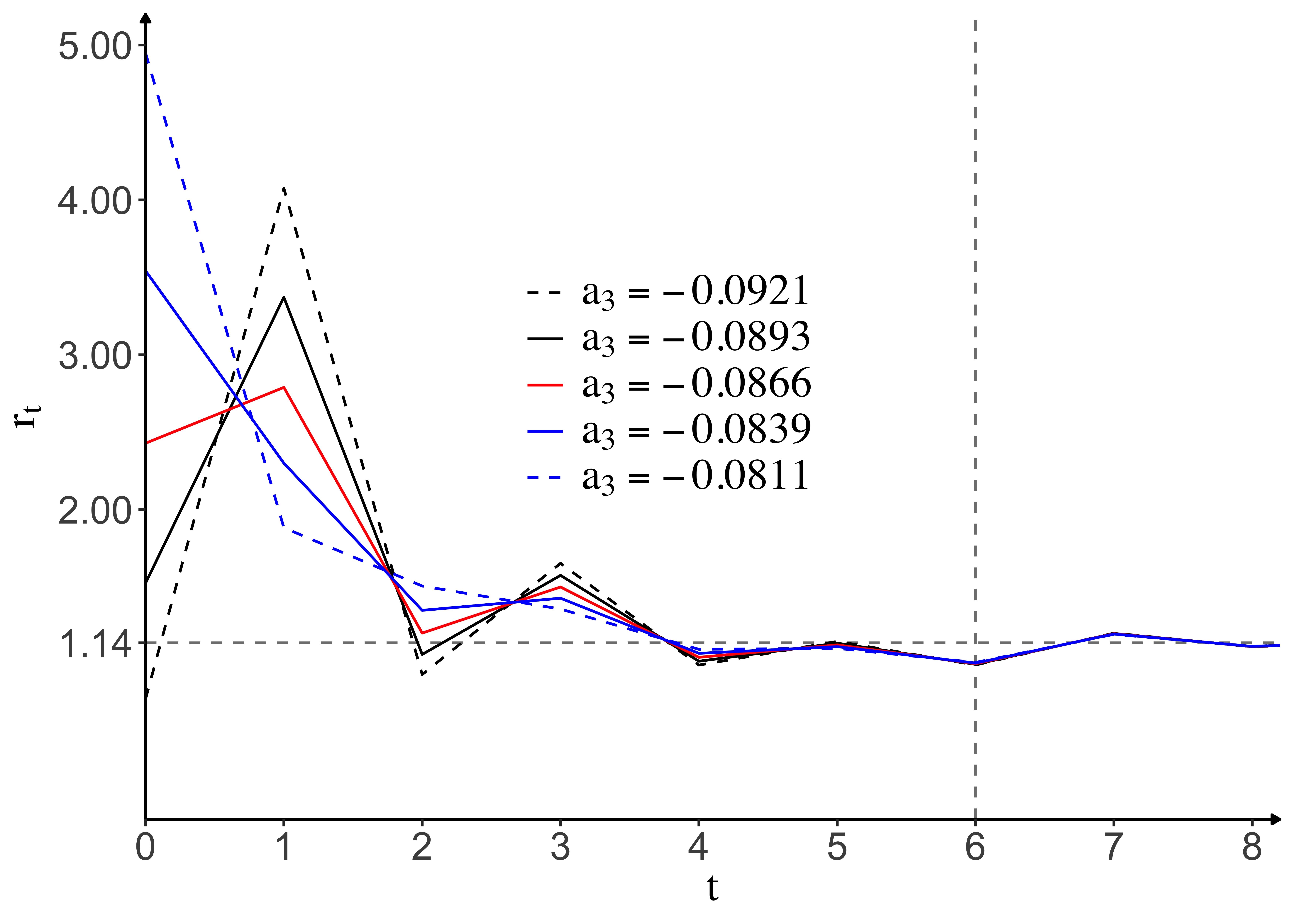}
\centering
\caption{Optimal return rates in the ``complete'' economy of Brazil for different values of $a_{3}\in(-1,1/\alpha_{\tau}\vert\lambda_{3}\vert)=(-1,2.43)$. The vertical dashed line marks the beginning of the tail economy.}
\label{fig3}
\end{figure}

First, notice that although Figure \ref{fig2} depicts solutions for $a_{3}\in(-1,1/\alpha_{\tau}\vert\lambda_{3}\vert)$, the backward calculation algorithm does not reach its end (i.e., calculates $p_{0}$) for all these values. This is because Theorem \ref{theoAlgorithmFinal} implies that the algorithm will reach its end for at least one Pareto optimal equilibrium of the tail economy, not necessarily for all of them. Figure \ref{fig3} reveals that for $a_{3}\in[-0.0963,-0.0827]$, the algorithm does calculate all prices and, therefore, all return rates. 

In order to select one of these return rate sequences to design our social security system, we can simply pick the one that minimizes the standard deviation of $\{r_{t}\}_{0\leq t\leq 6}$ (these are all the return rates faced by generations $H_{t}$, $0\leq t\leq 5$, i.e., generations that are not in the tail of the economy). Clearly, this is not the only choice criterion, but it seems a reasonable one for our theoretical purposes. This point will be further discussed later on. For Brazil, this minimum-variance sequence is given by $a_{3}=-0.0895$ (depicted by the solid red line in Figure \ref{fig3}).

Figure \ref{fig4} uses the same methodology to calculate backward these ``minimum variance'' return rates for all countries.

\begin{figure}[h]
\includegraphics[width=11cm]{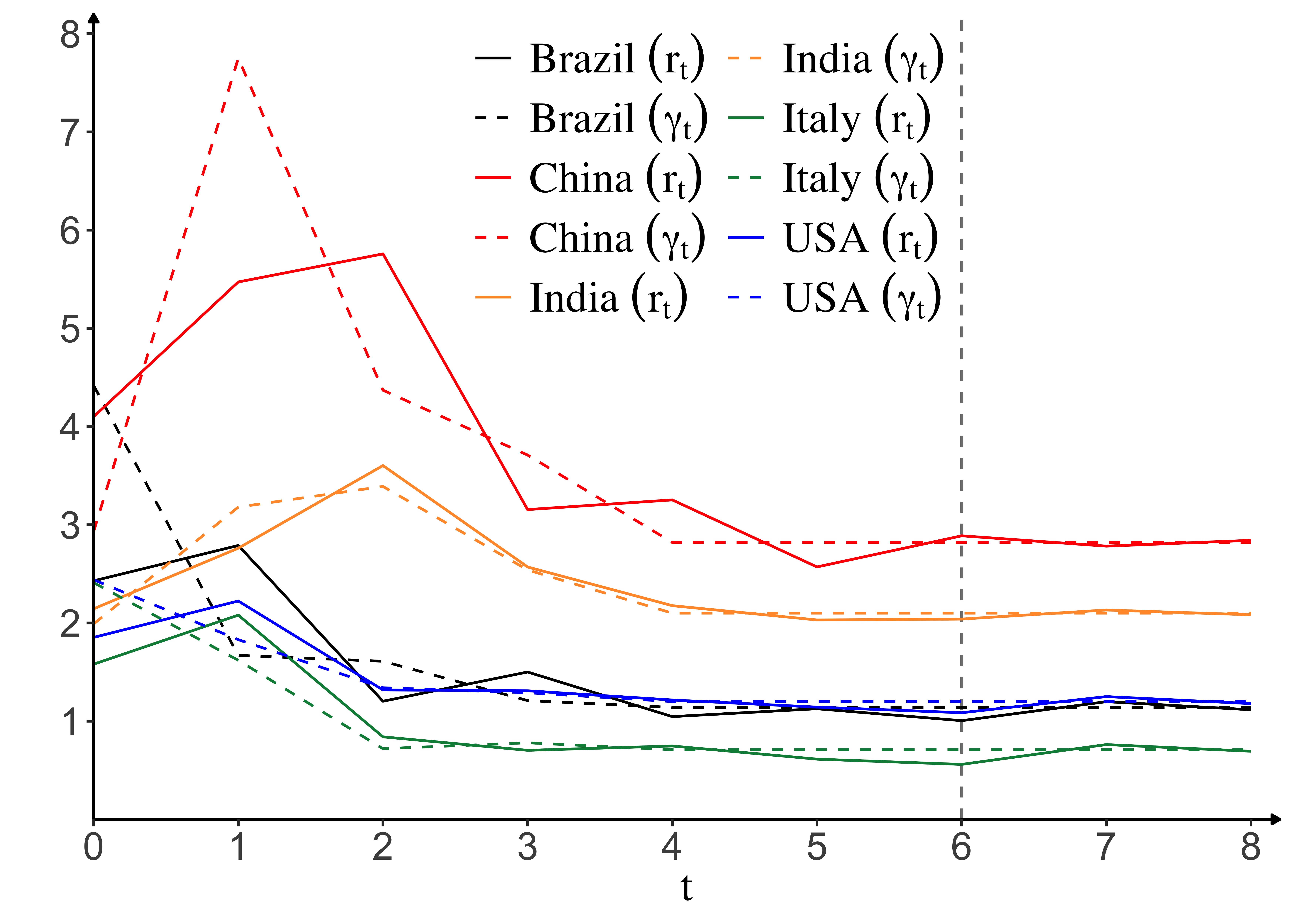}
\centering
\caption{Optimal return rates with minimum variance in the ``complete'' economy of Brazil, China, India, Italy and the United States, along with the values of $\{\gamma_{t}\}_{0\leq t\leq 8}$. The vertical dashed line marks the beginning of the tail economy.}
\label{fig4}
\end{figure}

Figure \ref{fig4} reveals that, as a rule-of-thumb, one can expect the return rates to roughly follow the growth rate of the aggregate endowment of the economy. When productivity is constant and, therefore, the aggregate endowment follows the demography, this is what \textcite[pp. 471-474]{Samuelson_1958} called a ``biological interest rate'' . In general, this can be seen as a pattern linking GDP growth rates and the optimal return rates of pay-as-you-go social security systems. 

The optimal rates, however, deviate from this pattern once a country faces a demographic transition (i.e., significant changes in demographic growth rates). This can be more clearly seen in the cases of Brazil and China when considering the minimum variance sequences of Figure \ref{fig4}.    

Finally, the return rates allow us to find the optimal contribution and replacement rates for each generation, since the design of the social security system is given by
\begin{eqnarray*}
    s^{t}&=&x^{t}(p_{t},p_{t+1},p_{t+2})-(e_{t},e_{t},\phi e_{t})\\
    &=&\frac{e_{t}}{1+\theta+\theta^{2}}\biggr(\frac{r_{t+1}+\phi}{r_{t}r_{t+1}}-\theta(1+\theta),\frac{r_{t}r_{t+1}+\phi}{r_{t+1}}-\theta(1+\theta),\theta^{2}r_{t+1}(1+r_{t})-\phi(1+\theta)\biggr),
\end{eqnarray*}
for $0\leq t\leq 7$. Therefore, the contribution rates in the first period of life of generation $G_{t}$, $0\leq t\leq 7$, are
\begin{eqnarray*}
    \sigma^{t}_{1}=\frac{1}{1+\theta+\theta^{2}}\biggr(\frac{r_{t+1}+\phi}{r_{t}r_{t+1}}-\theta(1+\theta)\biggr).
\end{eqnarray*}

Clearly, the design described above is only illustrative and uses a highly simplified setup. Nonetheless, to create a full-scale model, we would still need to follow this section step-by-step, albeit with some adaptations. 

For instance, each period would usually be a single calendar-year, the population would be heterogeneous (both in preferences and in endowments), the set of Pareto optimal equilibria of the tail economy would be given by a more general version of Corollary \ref{corHPO} and long-term forecasts would be used. Notice that in this section, the tail economy starts in 2070 (which would be fairly good if we were in 1950, but is far from ideal since we are already in 2026), and we simply let $\gamma_{t}=\gamma_{4}$, $t\geq4$, to ease the calculations.

There are still four remarks that must be highlighted. First, this design method relies on households' preferences, and this is a fundamental property of it. Traditionally, Governments are mostly interested in the fiscal balance of pay-as-you-go systems; therefore, when reforming pay-as-you-go systems, considerations regarding insured households' preferences are (if they exist) secondary.

The second remark is the following. When choosing among the multiple optimal equilibria given by the backward calculation algorithm, we adopted a minimum-variance criterion. There are, however, other possible criteria. One could, for instance, choose the return rates sequence that minimizes the distance to the ongoing contribution and replacement rates of the pay-as-you-go system, or select the one that minimizes changes in the replacement rates of the current older-generations.

The third remark is that one may use the backward calculation algorithm departing from the set of Pareto optimal equilibria from a tail economy even if one does not have a sufficient condition to guaranty that the ``complete'' economy is prone-to-savings. For instance, we could use a value of $\theta\geq1$ that does not satisfy Lemma \ref{LemmaBrazilChinaUSPronetoSavings}. However, in this case, there may be no value in the tail of the economy that can lead to a valid return rate sequence (i.e., there may be no solution for (\ref{eqSystemEqEquations}) for any value of prices in the tail economy).

The last remark is that since the endowments are exogenous, the retirement age also is. In Section \ref{sec2}, the social security system was completely defined by the contribution and the replacement rate of each household because, with a two-period lifespan, one can only retire in the last one. With $S\geq3$, however, the retirement age becomes a central parameter of the system. Therefore, as far as this paper goes, the backward calculation algorithm can only provide optimal contribution and replacement rates of the social security system for a \textit{given retirement age value}, which is implicit in the exogenous endowments.

\section{Concluding remarks}\label{sec5}

The main contribution of this paper is to reveal how general equilibrium theory and, particularly, overlapping generations models can be used to design optimally balanced pay-as-you-go social security systems in a general demographic setup. 

Since this design is based on the backward calculation algorithm in \textcite{Dognini_2025}, one fundamental aspect of it is the definition of well-behaved tail-economies and the full characterization of their sets of Pareto optimal equilibria. Corollary \ref{corHPO} goes in this direction, but in order to deploy this design in a real-world setup, we still need a broader result. Another milestone that still must be achieved is a more general sufficient condition for an economy to be prone-to-savings (see Lemma \ref{LemmaBrazilChinaUSPronetoSavings}).

Nonetheless, this paper sets forth a theoretical basis and a ``topological justification'' for the design of pay-as-you-go social security system through overlapping generations economies with well-behaved tails.

\appendix
\section*{Appendix}\label{appx}

All proofs are stated in this \hyperref[appx]{Appendix}.

\begin{proof}[Proof of Proposition~{\upshape\ref{propOptReturnRates1Dim}}]
Let $I=\{r_{1}\in\mathbb{R}_{++}\mid \exists p\in\mathbb{R}^{\infty}_{++}\textrm{ monetary equilibrium s.t. } r_{1}=p_{1}/p_{2}$\}. A standard analysis of the overlapping generations economy with fiat money $\mathcal{E}$ allows us to state that $I\neq\emptyset$ (e.g., \textcite{BalaskoCassShell_1980,OkunoZilcha_1981}). Notice also that: (i) if $r_{1}\in I$ and $0<s_{1}<r_{1}$, then $s_{1}\in I$; (ii) $\sup I\in I$; and (iii) if $r^{\alpha}_{1},r^{\beta}_{1}\in I$, $r^{\alpha}_{1}>r^{\beta}_{1}$, then $r^{\alpha}_{t}>r^{\beta}_{t}$, $t\geq1$. In particular, (i) and (ii) imply $I=(0,\tilde{r}_{1}]$ and (iii) implies that the allocation defined by $\tilde{r}_{1}>0$ Pareto dominates all other monetary equilibria, since households utility is strictly increasing according to the respective equilibrium price rate. I proceed with the following claim.
\begin{claim}\label{claimCassCriterion}
Let $p\in\mathbb{R}^{\infty}_{++}$ be a monetary equilibrium. Then, $\sum_{t\geq0} 1/H_{t}p_{t}<+\infty$ if, and only if, $r_{n}<\alpha_{\min}$, for some $n\geq1$.
\end{claim}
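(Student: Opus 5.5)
The plan is to exploit the telescoping structure of the market clearing equations. These turn the Cass-type series $\sum_{t\geq0}1/H_{t}p_{t}$ into a series in $\phi(r_{t})$. Once that reduction is done, both directions of the equivalence come down to elementary monotonicity of the map $r_{t}\mapsto r_{t+1}$.

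\emph{Step 1 (reduction).} Since $r_{t}=p_{t}/p_{t+1}$ and $\alpha_{t}=H_{t+1}/H_{t}$, we have
\begin{eqnarray*}
\frac{1/(H_{t+1}p_{t+1})}{1/(H_{t}p_{t})}=\frac{r_{t}}{\alpha_{t}}.
\end{eqnarray*}
For $t\geq1$ the market clearing equation $\alpha_{t}\phi(r_{t+1})=r_{t}\phi(r_{t})$ can be rewritten as
\begin{eqnarray*}
\frac{r_{t}}{\alpha_{t}}=\frac{\phi(r_{t+1})}{\phi(r_{t})}.
\end{eqnarray*}
Taking the product from $s=1$ to $t-1$, the right-hand side telescopes, giving
\begin{eqnarray*}
\frac{1}{H_{t}p_{t}}=\frac{1}{H_{1}p_{1}\phi(r_{1})}\,\phi(r_{t}),\qquad t\geq1.
\end{eqnarray*}
The $t=0$ equation, $\alpha_{0}\phi(r_{1})=r_{0}$, has a different form, but it only affects the single term $t=0$, which is finite. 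Hence $\sum_{t\geq0}1/H_{t}p_{t}<+\infty$ if, and only if, $\sum_{t\geq1}\phi(r_{t})<+\infty$.

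\emph{Step 2 (necessity).} Suppose $r_{t}\geq\alpha_{\min}$ for all $t\geq1$. Since $\phi$ is increasing, $\phi(r_{t})\geq\phi(\alpha_{\min})>0$ for every $t$, so the series $\sum_{t\geq1}\phi(r_{t})$ diverges. Therefore convergence forces $r_{n}<\alpha_{\min}$ for some $n\geq1$.

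\emph{Step 3 (sufficiency).} Suppose $r_{n}<\alpha_{\min}$ for some $n\geq1$, and set $q=r_{n}/\alpha_{\min}<1$. I will show by induction that $r_{t}\leq r_{n}$ for all $t\geq n$. Assume $r_{t}\leq r_{n}$. Then $r_{t}\leq r_{n}<\alpha_{\min}\leq\alpha_{t}$, and the equilibrium equation gives
\begin{eqnarray*}
\phi(r_{t+1})=\frac{r_{t}}{\alpha_{t}}\,\phi(r_{t})\leq q\,\phi(r_{t})<\phi(r_{t}).
\end{eqnarray*}
Because $\phi$ is strictly increasing, this yields $r_{t+1}<r_{t}\leq r_{n}$, which closes the induction. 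The same display then holds for every $t\geq n$, so $\phi(r_{t})\leq q^{t-n}\phi(r_{n})$. The series $\sum_{t\geq1}\phi(r_{t})$ is therefore dominated by a convergent geometric tail, and Step 1 gives $\sum_{t\geq0}1/H_{t}p_{t}<+\infty$.

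The only delicate point I expect is Step 1. One must be careful that the $t=0$ market clearing equation has a different form, so the telescoping product starts at $s=1$. One must also note that the recursion $\alpha_{t}\phi(r_{t+1})=r_{t}\phi(r_{t})$ is available for every $t\geq1$ precisely because $p$ is assumed to be a monetary equilibrium. Beyond that, the argument uses only the monotonicity of $\phi$ and the bound $\alpha_{t}\geq\alpha_{\min}$.
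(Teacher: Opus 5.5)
Your proof is correct and takes essentially the same route as the paper. The paper likewise uses market clearing to get $\phi(r_{t})=H_{0}/(H_{t}p_{t})$ (your constant $H_{1}p_{1}\phi(r_{1})$ equals $H_{0}$ by the $t=0$ equation), proves necessity by the term test, and proves sufficiency with the same inductive geometric bound $\phi(r_{t+1})\leq (r_{t}/\alpha_{\min})\phi(r_{t})$; your contrapositive via $\phi(r_{t})\geq\phi(\alpha_{\min})$ is only a slightly cleaner form of that term test.
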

Since $p\in\mathbb{R}^{\infty}_{++}$ is a monetary equilibrium, the market clearing equations imply 
\begin{eqnarray*}
\phi(r_{t})=\frac{1}{p_{1}\alpha_{0}}\prod^{t-1}_{i=1}\frac{r_{i}}{\alpha_{i}}=\frac{H_{0}}{H_{t}p_{t}},
\end{eqnarray*}
for $t\geq1$. If $\sum_{t\geq0} 1/H_{t}p_{t}<+\infty$, then $\lim_{t\rightarrow\infty}1/H_{t}p_{t}=0$. Therefore, $\lim_{t\rightarrow\infty}\phi(r_{t})=\lim_{t\rightarrow\infty}H_{0}/H_{t}p_{t}=0$, so that $\lim_{t\rightarrow\infty}r_{t}=\phi^{-1}(\lim_{t\rightarrow\infty}\phi(r_{t}))=0$. We conclude that $r_{n}<\alpha_{\min}$, for some $n\geq1$.

For the converse, let $n\geq1$ be such that $r_{n}<\alpha_{\min}/(1+\varepsilon)$, for some $\varepsilon>0$. Since $\phi(\cdot)$ is strictly increasing, we have
\begin{eqnarray*}
    r_{n}<\frac{\alpha_{\min}}{1+\varepsilon} \implies \phi(r_{n+1})=\frac{r_{n}\phi(r_{n})}{\alpha_{n}} \leq \frac{r_{n}\phi(r_{n})}{\alpha_{\min}}< \frac{\phi(r_{n})}{1+\varepsilon}.
\end{eqnarray*}
Notice also this last inequality implies $\phi (r_{n+1})<\phi(r_{n})< \phi(\alpha_{\min}/(1+\varepsilon))$, so that $r_{n+1}<\alpha_{\min}/(1+\varepsilon)$. By induction, we have $\phi(r_{n+i})< \phi(r_{n})/(1+\varepsilon)^{i}$, $i\geq1$. Therefore,
\begin{eqnarray*}
    \sum_{t\geq0} \frac{1}{H_{t}p_{t}}=\sum^{n-1}_{t\geq0} \frac{1}{H_{t}p_{t}} + \frac{1}{H_{0}}\sum_{t\geq n} \phi(r_{t})< \sum^{n-1}_{t\geq0} \frac{1}{H_{t}p_{t}} + \frac{\phi(r_{n})}{H_{0}}\sum_{i\geq 0}\frac{1}{(1+\varepsilon)^{i}}<\infty. 
\end{eqnarray*}
This discussion also allows us to state the following.
\begin{claim}\label{claimLim0}
    Let $p\in\mathbb{R}^{\infty}_{++}$ be a monetary equilibrium. Then $\lim_{n\rightarrow\infty}r_{t}=0$ if, and only if, $r_{n}<\alpha_{\min}$, for some $n\geq1$.
\end{claim}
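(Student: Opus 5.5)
The plan is to prove the two implications separately. The forward one is immediate. For the converse I will reuse the geometric-decay estimate from the second half of the proof of Claim~\ref{claimCassCriterion}.

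\emph{Forward implication.} Suppose $\lim_{t\rightarrow\infty}r_{t}=0$. Since $\alpha_{\min}>0$, the definition of the limit gives some $n\geq1$ with $r_{n}<\alpha_{\min}$, and nothing more is needed.

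\emph{Converse implication.} Suppose $r_{n}<\alpha_{\min}$ for some $n\geq1$. The inequality is strict, so I can choose $\varepsilon>0$ with $r_{n}<\alpha_{\min}/(1+\varepsilon)$. This is exactly the hypothesis used in the converse part of the proof of Claim~\ref{claimCassCriterion}. That argument uses the market clearing relation $\alpha_{t}\phi(r_{t+1})=r_{t}\phi(r_{t})$, monotonicity of $\phi$, and $\alpha_{t}\geq\alpha_{\min}$. It first gives $\phi(r_{n+1})<\phi(r_{n})/(1+\varepsilon)$ and $r_{n+1}<\alpha_{\min}/(1+\varepsilon)$. By induction it then gives
\begin{eqnarray*}
0<\phi(r_{n+i})<\frac{\phi(r_{n})}{(1+\varepsilon)^{i}},\quad i\geq1.
\end{eqnarray*}
Letting $i\rightarrow\infty$ yields $\lim_{t\rightarrow\infty}\phi(r_{t})=0$.

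To return to $r_{t}$, I note that $\phi$ is a strictly increasing bijection from $\mathbb{R}_{++}$ onto $(0,1)$. Its inverse is $\phi^{-1}(y)=y/(1-y)$, which is continuous on $[0,1)$ and vanishes at $0$. Hence $r_{t}=\phi^{-1}(\phi(r_{t}))\rightarrow0$.

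There is no real obstacle here. The only point needing care is that the strict inequality $r_{n}<\alpha_{\min}$ is what allows a uniform $\varepsilon>0$. That uniform $\varepsilon$ is what drives the geometric decay, and it does not depend on the sequence $\{\alpha_{t}\}$ beyond the lower bound $\alpha_{\min}$.
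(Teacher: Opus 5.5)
Your proof is correct and follows the paper's route. The paper obtains this claim directly from the proof of Claim~\ref{claimCassCriterion}: the forward direction is the trivial step from $\lim_{t\rightarrow\infty}r_{t}=0$ to some $r_{n}<\alpha_{\min}$, and the converse is the same geometric decay $\phi(r_{n+i})<\phi(r_{n})/(1+\varepsilon)^{i}$ followed by inverting $\phi$, where you usefully make explicit the choice of $\varepsilon>0$ and the continuity of $\phi^{-1}$ at $0$.
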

I proceed with the claim below.
\begin{claim}\label{claimLMinusLPlus}
There are $0<L_{-}<L_{+}$ such that if $L\geq L_{+}$, then $f_{t}(L)\leq L$, $t\geq1$; and if $L\leq L_{-}$, then $f_{t}(L)\geq L$, $t\geq1$.
\end{claim}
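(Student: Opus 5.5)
The claim reduces to comparing $f_{t}(L)=\psi(\alpha_{t}\phi(L))$ with $L$, using only that $\psi$ and $\phi$ are strictly increasing and that $\alpha_{t}\in[\alpha_{\min},\alpha_{\max}]$. Monotonicity gives, for every $t\geq1$,
\begin{eqnarray*}
\psi(\alpha_{\min}\phi(L))\leq f_{t}(L)\leq \psi(\alpha_{\max}\phi(L)).
\end{eqnarray*}
So it is enough to find $L_{+}$ with $\psi(\alpha_{\max}\phi(L))\leq L$ for all $L\geq L_{+}$, and $L_{-}$ with $\psi(\alpha_{\min}\phi(L))\geq L$ for all $L\leq L_{-}$. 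These bounds are uniform in $t$.

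Since $\psi$ is the inverse of the increasing bijection $g(r)=r\phi(r)=r^{2}/(1+r)$ of $\mathbb{R}_{++}$, we have $\psi(y)\leq L$ if and only if $y\leq g(L)$. For the upper bound, the condition $\psi(\alpha_{\max}\phi(L))\leq L$ becomes $\alpha_{\max}L/(1+L)\leq L^{2}/(1+L)$, that is, $L\geq\alpha_{\max}$. Similarly, $\psi(\alpha_{\min}\phi(L))\geq L$ becomes $\alpha_{\min}\phi(L)\geq g(L)$, that is, $L\leq\alpha_{\min}$.

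I would therefore take $L_{+}=\alpha_{\max}$ and $L_{-}$ any number in $(0,\alpha_{\min})$, for instance $L_{-}=\alpha_{\min}/2$. This guarantees $0<L_{-}<L_{+}$ even in the degenerate case $\alpha_{\min}=\alpha_{\max}$. The only step needing care is confirming that $\psi=g^{-1}$ on all of $\mathbb{R}_{++}$. This follows because $g$ is continuous, strictly increasing, tends to $0$ at $0$ and to $\infty$ at $\infty$, and $\psi(g(r))=r$ was noted earlier in the paper; a direct check of the quadratic $r^{2}-yr-y=0$ gives the same. There is no real obstacle beyond this.
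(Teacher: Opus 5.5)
Your proof is correct, and it takes a sharper route than the paper's. Both arguments start from the monotonicity of $f_{t}(L)=\psi(\alpha_{t}\phi(L))$ in $\alpha_{t}$, but from there they diverge.

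\textbf{The paper's route.} It bounds crudely. It uses $\phi(L)\leq 1$ to get the global estimate $f_{t}(L)\leq\psi(\alpha_{\max})$ and sets $L_{+}=\psi(\alpha_{\max})$. For the lower threshold it restricts to $L\leq 1$, uses $\phi(L)\geq L/2$ together with the closed form of $\psi$, and defines $L_{-}$ only implicitly through an inequality involving $\sqrt{1+8/(\alpha_{\min}L_{-})}$. The inequality $L_{-}<L_{+}$ is not checked explicitly there, although it does hold.

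\textbf{Your route.} You exploit $\psi=g^{-1}$ with $g(r)=r\phi(r)$. This turns the comparison into the exact equivalences $f_{t}(L)\leq L\iff L\geq\alpha_{t}$ and $f_{t}(L)\geq L\iff L\leq\alpha_{t}$. What this buys:
\begin{itemize}
\item the optimal thresholds $L_{+}=\alpha_{\max}$ and any $L_{-}\leq\alpha_{\min}$, where your strict choice $\alpha_{\min}/2$ correctly handles the stationary case $\alpha_{\min}=\alpha_{\max}$;
\item $L_{-}<L_{+}$ becomes immediate;
\item it explains the fixed point $f(\alpha)=\alpha$ that the paper uses in the stationary case and in the corollary.
\end{itemize}

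\textbf{What the paper's choice buys.} Its cruder constant comes with the uniform bound $f_{t}\leq\psi(\alpha_{\max})$ on all of $\mathbb{R}_{++}$. The paper later invokes this bound (``$f_{t}(\cdot)$ is bounded'') to get convergence of the non-decreasing backward iterates. With your constants this bound is not part of the claim, but it is equally easy to supply: either from $\phi<1$ directly, or from the fact that $x\leq L_{+}$ implies $f_{s}(x)\leq f_{s}(L_{+})\leq L_{+}$, so iterates started below $L_{+}$ stay below $L_{+}$.
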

Notice that we can take $L_{+}=\psi(\alpha_{\max})$, since $L\geq L_{+}$ implies
\begin{eqnarray*}
    f_{t}(L)=\psi(\alpha_{t}\phi(L))\leq \psi(\alpha_{\max})=L_{+}\leq L,
\end{eqnarray*}
for $t\geq1$. Also, let $L_{-}\leq 1$, be any value such that
\begin{eqnarray*}
    \frac{\alpha_{\min}}{2}\biggr(\frac{1+\sqrt{1+8/(\alpha_{\min}L_{-})}}{2}\biggr)\geq1.
\end{eqnarray*}
Then, $L\leq L_{-}$ implies
\begin{eqnarray*}
    f_{t}(L)=\psi\biggr(\frac{\alpha_{t}L}{1+L}\biggr)\geq \psi\biggr(\frac{\alpha_{\min}L}{2}\biggr)=\frac{L\alpha_{\min}}{2}\biggr(\frac{1+\sqrt{1+8/(\alpha_{\min}L)}}{2}\biggr)\geq L,
\end{eqnarray*}
for $t\geq1$.
Next, for $L>0$, $1\leq t\leq n$, let $r^{n}_{t}(L)=f_{t}\circ\ldots\circ f_{n}(L)$. Notice that
\begin{eqnarray*}
    r^{n}_{t}=f_{t}(r^{n}_{t+1})=\psi(\alpha_{t}\phi(r^{n}_{t+1}))\implies r^{n}_{t}\phi(r^{n}_{t})=\alpha_{t}\phi(r^{n}_{t+1}),
\end{eqnarray*}
for $1\leq t<n$. Therefore, if $\lim_{n\rightarrow\infty}r^{n}_{t}=\tilde{r}_{t}$, $t\geq1$, then
\begin{eqnarray*}
\tilde{r}_{t}\phi(\tilde{r}_{t})=\alpha_{t}\phi(\tilde{r}_{t+1}),
\end{eqnarray*}
for $t\geq1$.

If $L\leq L_{-}$, then Claim \ref{claimLMinusLPlus} implies
\begin{eqnarray*}
    r^{n+1}_{t}(L)=f_{t}\circ\ldots\circ f_{n}\circ f_{n+1}(L)\geq f_{t}\circ\ldots\circ f_{n}(L)=r^{n}_{t}(L)
\end{eqnarray*}
for $1\leq t\leq n$, and $\{r^{n}_{t}(L)\}_{n\geq t}$, $t\geq1$, is a non-decreasing sequence. Since $f_{t}(\cdot)$, $t\geq1$, is bounded, there exists $\lim_{n\rightarrow\infty}r^{n}_{t}(L)=\tilde{r}_{t}(L)$, with
\begin{eqnarray}\label{eqLowerBound1}
    \tilde{r}_{t}(L)=\lim_{n\rightarrow\infty}r^{n}_{t}(L)\geq r^{t}_{t}(L)=f_{t}(L)\geq \psi(\alpha_{\min}\phi(L))>0,
\end{eqnarray}
for $t\geq1$. Since the lower bound in (\ref{eqLowerBound1}) is valid for all $t\geq1$, Claim \ref{claimLim0} implies that 
\begin{eqnarray}\label{eqLowerBound2}
    \tilde{r}_{t}(L)\geq\alpha_{\min},
\end{eqnarray}
for $t\geq1$. Let $\tilde{p}(L)\in\mathbb{R}^{\infty}_{++}$ be the monetary equilibrium defined by $\{\tilde{r}_{t}(L)\}_{t\geq1}$. Claim \ref{claimCassCriterion}, (\ref{eqLowerBound2}) and the \textcite{Cass_1972} criterion applied to the allocation defined by $\tilde{p}(L)\in\mathbb{R}^{\infty}_{++}$ \parencite[Theorems 3A and 3B, pp. 803-804]{OkunoZilcha_1980} imply that it is Pareto optimal. Since the allocation defined by $\tilde{r}_{1}$ Pareto dominates all others defined by monetary equilibria, we conclude that $\tilde{r}_{1}(L)=\tilde{r}_{1}$, for $0<L\leq L_{-}$.

If $L\geq L_{+}$, then Claim \ref{claimLMinusLPlus} implies
\begin{eqnarray*}
    r^{n+1}_{t}(L)=f_{t}\circ\ldots\circ f_{n}\circ f_{n+1}(L)\leq f_{t}\circ\ldots\circ f_{n}(L)=r^{n}_{t}(L)
\end{eqnarray*}
for $1\leq t\leq n$, and $\{r^{n}_{t}(L)\}_{n\geq t}$, $t\geq1$, is a non-increasing sequence. Therefore, there exists $\lim_{n\rightarrow\infty}r^{n}_{t}(L)=\tilde{r}_{t}(L)$, with
\begin{eqnarray}\label{eqLowerBound3}
    \tilde{r}_{t}(L)=\lim_{n\rightarrow\infty}r^{n}_{t}(L)\geq \lim_{n\rightarrow\infty}r^{n}_{t}(L_{-})\geq \alpha_{\min}>0,
\end{eqnarray}
for $t\geq1$, where the previous to last inequality is due to (\ref{eqLowerBound2}). Let $\tilde{p}(L)\in\mathbb{R}^{\infty}_{++}$ be the monetary equilibrium defined by $\{\tilde{r}_{t}(L)\}_{t\geq1}$. Claim \ref{claimCassCriterion}, (\ref{eqLowerBound3}) and the \textcite{Cass_1972} criterion applied to the allocation defined by $\tilde{p}(L)\in\mathbb{R}^{\infty}_{++}$ \parencite[Theorems 3A and 3B, pp. 803-804]{OkunoZilcha_1980} imply that it is Pareto optimal. Since the allocation defined by $\tilde{r}_{1}$ Pareto dominates all others defined by monetary equilibria, we conclude that $\tilde{r}_{1}(L)=\tilde{r}_{1}$, for $L\geq L_{+}$.

Finally, let $L_{-}<L<L_{+}$ and notice that
\begin{eqnarray*}
    \tilde{r}_{1}=\lim_{n\rightarrow\infty}r^{n}_{1}(L_{-})\leq \lim_{n\rightarrow\infty}r^{n}_{1}(L)\leq \lim_{n\rightarrow\infty}r^{n}_{1}(L_{+})=\tilde{r}_{1} \implies \lim_{n\rightarrow\infty}r^{n}_{1}(L)=\tilde{r}_{1}.
\end{eqnarray*}
\end{proof}

\begin{proof}[Proof of Corollary~{\upshape\ref{corMarginalDemChange}}]
Notice that Proposition \ref{propOptReturnRates1Dim} implies
\begin{eqnarray*}
    \frac{\partial \tilde{r}_{1}}{\partial \alpha_{t}}&=&\frac{\partial}{\partial \alpha_{t}}\biggr(\lim_{n\rightarrow\infty} f_{1}\circ \ldots \circ f_{n}(\alpha)\biggr)\\
    &=&\frac{\partial}{\partial \alpha_{t}}\biggr(f_{1}\circ \ldots \circ f_{t}\circ \biggr(\lim_{n\rightarrow\infty} f_{t+1}\circ\ldots\circ f_{n}(\alpha)\biggr)\biggr)\\
    &=&\frac{\partial}{\partial \alpha_{t}}\biggr(f_{1}\circ \ldots \circ \psi(\alpha_{t}\phi(\alpha))\biggr)\\
    &=&(f^{\prime}(\alpha))^{t-1}\psi^{\prime}(\alpha\phi(\alpha))\phi(\alpha),
\end{eqnarray*}
for $t\geq1$. Next, we have
\begin{eqnarray*}
    \psi^{\prime}(r)=\frac{1}{2}\biggr(1+\frac{r+2}{\sqrt{r^{2}+4r}}\biggr)
\end{eqnarray*}
for $r>0$, so that
\begin{eqnarray*}
    \psi^{\prime}(\alpha\phi(\alpha))=\frac{1}{2}\biggr(1+\frac{\alpha^{2}/(1+\alpha)+2}{\sqrt{\alpha^{4}/(1+\alpha)^{2}+4\alpha^{2}/(1+\alpha)}}\biggr)
    =\frac{(\alpha+1)^{2}}{\alpha(\alpha+2)}.
\end{eqnarray*}
Also, $f^{\prime}(r)=\psi^{\prime}(\alpha\phi(r))\alpha\phi^{\prime}(r)$, $r>0$, so that $f^{\prime}(\alpha)=1/(\alpha+2)$.We conclude that
\begin{eqnarray*}
    \frac{\partial \tilde{r}_{1}}{\partial \alpha_{t}}=\frac{1+\alpha}{(2+\alpha)^{t}},
\end{eqnarray*}
for $t\geq1$.
\end{proof}

\begin{proof}[Proof of Lemma~{\upshape\ref{lemmaTailProneToSavings}}]
Notice that (\ref{eqRealSavingsTail}) implies that
\begin{eqnarray*}
    \sum_{h\in\tilde{G}_{t}}\frac{\tilde{s}^{h}(\tilde{p}_{t},\tilde{p}_{t+1})}{\tilde{H}^{\tau k}_{t}}       &\geq&\frac{(1+\theta+\alpha)e}{(1+\theta+\theta^{2})(1+\alpha)}\biggr(\frac{\theta^{2}}{1+\theta+\alpha}-\frac{\Vert \tilde{p}_{t+1}\Vert}{\Vert \tilde{p}_{t}\Vert}\biggr)\\
    &=&f_{2}(\alpha,\theta,e)\biggr(f_{1}(\alpha,\theta)-\frac{\Vert \tilde{p}_{t+1}\Vert}{\Vert \tilde{p}_{t}\Vert}\biggr),
\end{eqnarray*}
for $(\tilde{p}_{t},\tilde{p}_{t+1})\in\mathbb{R}^{4}_{++}$, $t\geq0$. By assumption, $f_{1}(\alpha,\theta)>\alpha^{-2}$, so that there are $\varepsilon,\delta>0$ such that
\begin{eqnarray*}
    \frac{f_{2}(\alpha\theta,e)}{f_{1}(\alpha,\theta)f_{2}(\alpha,\theta,e)-\delta} \leq \frac{\alpha^{2}}{1+\varepsilon}.
\end{eqnarray*}
Then, we have
\begin{eqnarray*}
    \sum_{h\in\tilde{G}_{t}}\frac{\tilde{s}^{h}(\tilde{p}_{t},\tilde{p}_{(t+1)})}{\tilde{H}_{t}}\leq\delta&\implies&f_{2}(\alpha,\theta,e)\biggr(f_{1}(\alpha,\theta)-\frac{\Vert \tilde{p}_{t+1}\Vert}{\Vert \tilde{p}_{t}\Vert}\biggr)\leq \delta\\
    &\implies& \frac{\Vert \tilde{p}_{t}\Vert}{\Vert \tilde{p}_{t+1}\Vert}\leq\frac{f_{2}(\alpha,\theta,e)}{f_{1}(\alpha,\theta)f_{2}(\alpha,\theta,e)-\delta} \leq \frac{\alpha^{2}}{1+\varepsilon}=\frac{\tilde{\alpha}^{\tau k}_{t}}{1+\varepsilon},
\end{eqnarray*}
for $t\geq0$, and the economy $\tilde{\mathcal{T}}_{k}$ is prone to savings. 
\end{proof}

\begin{proof}[Proof of Proposition~{\upshape\ref{propTailSetofEquilibria}}]
Suppose $\tilde{p}\in\tilde{\mathcal{H}}_{\tau k}$, so that $(\tilde{p}_{0},\tilde{p}_{1})\in\mathcal{B}_{0}(\sigma_{0})$. The market clearing equations for $\tilde{\mathcal{T}}_{k}$, $k\geq0$, can be written as
\begin{eqnarray*}
p_{t+4}=\frac{-\theta^{2}p_{t}-\theta(\theta+\alpha)p_{t+1}+\Delta(\alpha,\theta)p_{t+2}-\alpha(\theta+\alpha)p_{t+3}}{\alpha^{2}}
\end{eqnarray*}
for $t\geq0$, with $\Delta(\alpha,\theta)=(1+\alpha+\alpha^{2})(1+\theta+\theta^{2})-\alpha^{2}-\alpha\theta-\theta^{2}$.  Let $\Omega(\alpha,\theta)\in\mathbb{R}^{4\times 4}$ be the matrix
\begin{eqnarray*}
    \Omega(\alpha,\theta)=\begin{bmatrix}
        0 & 1 & 0 & 0\\
        0 & 0 & 1 & 0\\
        0 & 0 & 0 & 1\\
        \frac{-\theta^{2}}{\alpha^{2}} & \frac{-\theta(\theta+\alpha)}{\alpha^{2}}
        & \frac{\Delta(\alpha,\theta)}{\alpha^{2}} & \frac{-\alpha(\theta+\alpha)}{\alpha^{2}}
    \end{bmatrix},
\end{eqnarray*}
so that
\begin{eqnarray}\label{eqRecurrence1}
    \begin{bmatrix}
        p_{t}\\
        p_{t+1}\\
        p_{t+2}\\
        p_{t+3}
    \end{bmatrix}=\Omega(\alpha,\theta)^{t}
    \begin{bmatrix}
        p_{0}\\
        p_{1}\\
        p_{2}\\
        p_{3}
    \end{bmatrix},
\end{eqnarray}
for $t\geq0$. The characteristic polynomial of $\Omega(\alpha,\theta)$ is given by
\begin{eqnarray*}
    p(\lambda)=\frac{1}{\alpha}(\lambda-\theta)\biggr(\lambda-\frac{1}{\alpha}\biggr)\biggr(\alpha\lambda^{2}+(1+\alpha)(1+\theta)\lambda+\theta\biggr),
\end{eqnarray*}
Therefore, the eigenvalues of $\Omega(\alpha,\theta)$ are $\lambda_{1}=\theta$, $\lambda_{2}=1/\alpha$,
\begin{eqnarray*}
    \lambda_{3}&=&\frac{-(1+\alpha)(1+\theta)+\sqrt{(1+\alpha)^{2}(1+\theta)^{2}-4\theta\alpha}}{2\alpha}\\
    \lambda_{4}&=&\frac{-(1+\alpha)(1+\theta)-\sqrt{(1+\alpha)^{2}(1+\theta)^{2}-4\theta\alpha}}{2\alpha},
\end{eqnarray*}
and the corresponding eigenvectors $\omega_{i}$, $1\leq i\leq 4$, can be written as
\begin{eqnarray*}
    \omega_{i}=\begin{bmatrix}
        1\\
        \lambda_{i}\\
        \lambda_{i}^{2}\\
        \lambda_{i}^{3}
    \end{bmatrix},
\end{eqnarray*}
for $1\leq i\leq 4$. Notice first that $\lambda_{4}<\lambda_{3}<0<1/\alpha<\theta$, and so $\{\omega_{i}\}_{1\leq i\leq 4}$ is a basis of $\mathbb{R}^{4}$. Then, we can always write $(p_{0},p_{1},p_{2},p_{3})=\sum^{4}_{i=1}a_{i}\omega_{i}$, so that
(\ref{eqRecurrence1}) can be simplified to
\begin{eqnarray}\label{eqRecurrence2}
        \begin{bmatrix}
            p_{t} \\ p_{t+1} \\ p_{t+2} \\ p_{t+3}
        \end{bmatrix} = \sum^{4}_{i=1} a_{i}\lambda_{i}^{t} \begin{bmatrix}
            1 \\ \lambda_{i} \\ \lambda^{2}_{i} \\ \lambda^{3}_{i}
        \end{bmatrix}
\end{eqnarray}
for $t\geq0$. Next, notice that
\begin{eqnarray*}
    \vert\lambda_{3}\vert=-\lambda_{3}<\frac{1}{\alpha}&\iff& (1+\alpha)(1+\theta)-\sqrt{(1+\alpha)^{2}(1+\theta)^{2}-4\theta\alpha}<2\\
    &\iff& (1+\alpha)^{2}(1+\theta)^{2}-4(1+\alpha)(1+\theta)+4<(1+\alpha)^{2}(1+\theta)^{2}-4\theta\alpha\\
    &\iff&0<\alpha+\theta,
\end{eqnarray*}
and that
\begin{eqnarray*}
    \vert\lambda_{4}\vert=-\lambda_{4}>\theta&\iff&(1+\alpha)(1+\theta)+\sqrt{(1+\alpha)^{2}(1+\theta)^{2}-4\theta\alpha}>2\alpha\theta\\
    &\iff&(1+\alpha)^{2}(1+\theta)^{2}-4\theta\alpha>4\alpha^{2}\theta^{2}-4\alpha\theta(1+\alpha)(1+\theta)+(1+\alpha)^{2}(1+\theta)^{2}\\
    &\iff&\alpha+\theta>0.
\end{eqnarray*}
Since $p_{t}>0$, $t\geq0$, we have that $\vert \lambda_{4}\vert>\theta>1/\alpha>\vert \lambda_{3}\vert$ and $\lambda_{4}<0$ imply that $a_{4}=0$. Otherwise, a negative price would eventually arise in the sequence. Also, $\theta>1/\alpha>\vert \lambda_{3}\vert$ implies, by the same reasoning, $a_{1}\geq0$. Then, we can write $(p_{0},p_{1},p_{2},p_{3})=\sum^{3}_{i=1}a_{i}\omega_{i}$ and looking at the first coordinate yields $a_{1}+a_{2}+a_{3}=1$. We conclude from (\ref{eqRecurrence2}) that
    \begin{eqnarray}\label{eqRecurrence3}
        \begin{bmatrix}
            p_{t} \\ p_{t+1} \\ p_{t+2} \\ p_{t+3}
        \end{bmatrix} = 
        a_{1} \theta^{t} \begin{bmatrix}
            1 \\ \theta \\ \theta^{2} \\ \theta^{3}
        \end{bmatrix}+
        a_{2} \alpha^{-t} \begin{bmatrix}
            1 \\ 1/\alpha \\ 1/\alpha^{2} \\ 1/\alpha^{3}
        \end{bmatrix}+
        a_{3} \lambda^{t}_{3} \begin{bmatrix}
            1 \\ \lambda_{3} \\ \lambda_{3}^{2} \\ \lambda^{3}_{3}
        \end{bmatrix},
    \end{eqnarray}
for $t\geq0$. The converse is immediate.
\end{proof}
\begin{proof}[Proof of Lemma~{\upshape\ref{LemmaBrazilChinaUSPronetoSavings}}]
Let $\tilde{G}_{t}=G_{2t}\cup G_{2t+1}$, $0\leq t\leq 2$. Then, real savings per capita of generation $\tilde{G}_{t}$, $0\leq t\leq 2$, in this two-period model is given by
   \begin{eqnarray*}
        \sum_{h\in\tilde{G}_{t}}\frac{\tilde{s}^{h}(\tilde{p}_{t},\tilde{p}_{t+1})}{\tilde{H}_{t}}&=&\sum_{h\in G_{2t}}\frac{\tilde{p}_{t}\cdot(\tilde{e}^{h}_{t}-\tilde{x}^{h}_{t}(\tilde{p}_{t},\tilde{p}_{(t+1)}))}{\Vert \tilde{p}_{t}\Vert (H_{2t}+H_{2t+1})}+\sum_{h\in G_{2t+1}}\frac{\tilde{p}_{t}\cdot(\tilde{e}^{h}_{t}-\tilde{x}^{h}_{t}(\tilde{p}_{t},\tilde{p}_{(t+1)}))}{\Vert \tilde{p}_{t}\Vert(H_{2t}+H_{2t+1})}\\
        &=&\frac{e_{2t}}{1+\alpha_{2t}}-\frac{(1+\theta)(\tilde{p}_{t},\tilde{p}_{t+1})\cdot (e_{2t},e_{2t},\phi e_{2t},0)}{(1+\theta+\theta^{2})\Vert \tilde{p}_{t}\Vert(1+\alpha_{2t})}+\ldots\\
        &&\ldots \frac{\alpha_{2t} \tilde{p}_{t2}e_{2t+1}}{\Vert \tilde{p}_{t}\Vert(1+\alpha_{2t})}-\frac{\alpha_{2t} (\tilde{p}_{t},\tilde{p}_{t+1})\cdot (0,e_{2t+1},e_{2t+1},\phi e_{2t+1})}{(1+\theta+\theta^{2})\Vert \tilde{p}_{t}\Vert(1+\alpha_{2t})}\\
        &\geq&\frac{e_{2t}}{(1+\theta+\theta^{2})(1+\alpha_{2t})}\biggr(\theta^{2}-((1+\theta)\phi+\gamma_{2t})\frac{\Vert \tilde{p}_{t+1}\Vert}{\Vert\tilde{p}_{t}\Vert}\biggr).
    \end{eqnarray*}
Since $\gamma_{2t}<\theta^{2}\tilde{\alpha}_{t}-(1+\theta)\phi$, $0\leq t\leq 2$, there are $\varepsilon,\delta>0$ such that
\begin{eqnarray*}
    0<\frac{e_{2t}((1+\theta)\phi+\gamma_{2t})}{\theta^{2}e_{2t}-(1+\theta+\theta^{2})(1+\alpha_{2t})\delta} \leq \frac{\tilde{\alpha}_{t}}{1+\varepsilon}
\end{eqnarray*}
for $0\leq t\leq 3$. Then, we have
\begin{eqnarray*}
    \sum_{h\in\tilde{G}_{t}}\frac{\tilde{s}^{h}(\tilde{p}_{t},\tilde{p}_{t+1})}{\tilde{H}_{t}}\leq\delta&\implies&\frac{e_{2t}}{(1+\theta+\theta^{2})(1+\alpha_{2t})}\biggr(\theta^{2}-((1+\theta)\phi+\gamma_{2t})\frac{\Vert \tilde{p}_{t+1}\Vert}{\Vert \tilde{p}_{t}\Vert}\biggr)\leq \delta\\
    &\implies& \frac{\Vert \tilde{p}_{t}\Vert}{\Vert \tilde{p}_{t+1}\Vert}\leq\frac{e_{2t}((1+\theta)\phi+\gamma_{2t})}{\theta^{2}e_{2t}-(1+\theta+\theta^{2})(1+\alpha_{2t})\delta} \leq \frac{\tilde{\alpha}_{t}}{1+\varepsilon},
\end{eqnarray*}
for $0\leq t\leq 2$. 
\end{proof}

\printbibliography
\end{document}